\documentclass[10pt,a4paper]{article}

\usepackage{amsmath,amssymb,amsthm,mathrsfs}
\usepackage{stmaryrd}  
\usepackage{graphicx}
\usepackage{xcolor}
\usepackage{fullpage}
\usepackage{bm}

\vfuzz2pt 
\hfuzz2pt 

\newlength{\defbaselineskip}
\setlength{\defbaselineskip}{\baselineskip}
\newcommand{\setlinespacing}[1]%
{\setlength{\baselineskip}{#1 \defbaselineskip}}

\parindent 0em
\parskip 2ex


\theoremstyle{plain}
\newtheorem{theorem}{Theorem}[section]
\newtheorem{lemma}[theorem]{Lemma}
\newtheorem{proposition}[theorem]{Proposition}
\newtheorem{corollary}[theorem]{Corollary}

\theoremstyle{definition}

\theoremstyle{remark}
\newtheorem{remark}[theorem]{Remark}

\numberwithin{equation}{section}

\DeclareMathOperator*{\esssup}{ess\,sup}


\allowdisplaybreaks

\bibliographystyle{plain}

\begin{document}

\title{
Mean Field Portfolio Games with Consumption}
\author{Guanxing Fu\footnote{Department of Applied Mathematics, The Hong Kong Polytechnic University, Hung Hom, Kowloon, Hong Kong. Email: guanxing.fu@polyu.edu.hk. G. Fu's research is supported by The Hong Kong RGC (ECS No.25215122), NSFC Grant No.12101523, the Start-up Fund P0035348 from The Hong Kong Polytechnic University,  as well as the Research Centre for Quantitative Finance, The Hong Kong Polytechnic University (P0042708).} }

\maketitle

\begin{abstract}
	We study mean field portfolio games with consumption. For general market parameters, we establish a one-to-one correspondence between Nash equilibria of the game and solutions to some FBSDE, which is proved to be equivalent to some BSDE. Our approach, which is general enough to cover power, exponential and log utilities,  relies on martingale optimality principle in  \cite{Cheridito2011,HIM-2005} and dynamic programming principle in \cite{ET-2015,FR-2011}.  When the market parameters do not depend on the Brownian paths, we get the unique Nash equilibrium in closed form. As a byproduct, when all market parameters are time-independent, we answer the question proposed in \cite{LS-2020}: the {\it strong equilibrium} obtained in \cite{LS-2020} is unique in the essentially bounded space. 
\end{abstract}
{\bf AMS Subject Classification:} 93E20, 91B70, 60H30

{\bf Keywords:}{ mean field game, portfolio game, consumption, martingale optimality principle}

\section{Introduction}
As a game-theoretic extension of classical optimal investment problems in \cite{Merton1971}, portfolio games have received substantial considerations in the financial mathematics literature in recent years. In a portfolio game, each player chooses her investment and/or consumption to maximize her utility induced by some risk preference criterion, by taking her competitors' decisions into consideration. 
The goal of the portfolio game is to search for a Nash equilibrium (NE), such that no one would like to change her strategy unilaterally. 
One way to model the interaction among players in the portfolio game is through the price equilibrium; however, it typically leads to tractability issue. Another way to model the interaction is through the relative performance: each player's utility is driven by her own wealth as well as the relative wealth to  her competitors. 

The study of portfolio games with relative performance concerns dates back to \cite{ET-2015}, where many player portfolio games with common stocks and trading constraint were studied: in the context of complete markets, the unique NE was obtained for general utility functions; in the context of incomplete markets, the unique NE was obtained for games with exponential utility functions, where the uniqueness result was proved by establishing an equivalent relation between each NE for the game and each solution to a multidimensional BSDE. \cite{FR-2011} examined similar games as \cite{ET-2015} with a different focus: \cite{FR-2011} constructed counterexamples where no NE exists, by proving that the corresponding multidimensional BSDE has no solution. In contrast to \cite{ET-2015,FR-2011} where all players traded common stocks, \cite{LZ-2019} studied portfolio games where each player traded a different but correlated stock. Assuming all market parameters to be constant, \cite{LZ-2019} obtained the unique constant NE by solving coupled HJB equations. In constrast to classical utility functions, \cite{Zari-2020,RP-2021} studied portfolio games with forward utilities. 
Recently, \cite{Fu-2021} studied portfolio games with general market parameters. A one-to-one correspondence between each NE and each solution to some FBSDE was established by dynamic programming principle (DPP) and martingale optimality principle (MOP), so that the portfolio game was solved by solving the FBSDE.  In \cite{Fu-2021}, we also obtained an asymptotic expansion result in powers of the competition parameter. 

All the aforementioned results do not incorporate consumption. The only results on portfolio games with consumption, to the best of our knowledge, are \cite{LS-2020} and \cite{RP-2021b}. Assuming all market parameters to be constant, \cite{LS-2020} obtained a unique NE, which was called {\it strong equilibrium}\footnote{By \cite[Definition 2.1 and Definition 3.1]{LS-2020}, a \textit{strong equilibrium} is the one with time-independent investment rate and continuous consumption rate. Moreover, both the investment rate and the consumption rate are adapted to the initial filtration.}. \cite{RP-2021b} examined a portfolio game with both investment and consumption under the framework of forward performance processes; the market parameters were also assumed to be constant. 

In this paper, we will study portfolio games with consumption under classical utility criteria, where market parameters are allowed to be time-dependent.
Assume there are $N$ risky assets in the market, with price dynamics of asset $i\in\{1,\cdots,N\}$ following
\begin{equation}\label{price-i-intro}
	dS^i_t= S^i_t\Big( h^i_t\,dt+\sigma^i_t\,W^i_t+ \sigma^{i0}_t\,dW^0_t		\Big),
\end{equation}
where $h^i$ is the return rate, $\sigma^i$ is the volatility corresponding to the idiosyncratic noise $W^i$, and $\sigma^{i0}$ is the volatility corresponding to the common noise $W^0$.
We further assume that player $i$ specializes in asset $i$. 
Let $X^i$ be the wealth process of player $i$, whose dynamics is given by
 \begin{equation}\label{wealth-power-N}
	\setlength{\abovedisplayskip}{3pt}
	\setlength{\belowdisplayskip}{3pt}
	dX^i_t=\pi^i_tX^i_t\Big( h^i_t\,dt+\sigma^i_t\,dW^i_t+\sigma^{i0}_t\,dW^0_t		\Big)-c^i_t X^i_t\,dt,\quad X^i_0=x^i,
\end{equation}
where $x^i$ is the initial wealth, $\pi^i$ is the investment rate and $c^i$ is the consumption rate. 
The risk preference for each player is described by a power utility function, i.e. given other players' strategies, player $i$ chooses the pair of investment rate and consumption rate $(\pi^i,c^i)$ to maximize the expected power utility induced by her terminal wealth and intermediate consumption:
 \begin{equation}\label{cost-power-N}
  \max_{\pi^i,c^i} \mathbb E\left[	\frac{1}{\gamma^i}\left(X^i_T (  \overline X^{-i}_T )^{-\theta^i}	\right)^{\gamma^i}+\int_0^T\frac{\alpha^i}{\gamma^i}\left(c_s^iX^i_s(\overline{cX}_s^{-i})^{-\theta^i}\right)^{\gamma^i}\,ds	\right].
 \end{equation}
 Here, $\overline X^{-i}= \left(	\Pi_{j\neq i} X^j	\right)^{\frac{1}{N-1}} $ and $\overline{cX}^{-i}=\left(\Pi_{j\neq i}c^jX^j\right)^{\frac{1}{N-1} }$ are the performance indices of player $i$, and the constant $\alpha^i$ describes the relative importance of the utility induced by consumption and terminal wealth.  

Although our $N$-player game \eqref{wealth-power-N}-\eqref{cost-power-N} is solvable, in this paper we will focus on the corresponding mean field game (MFG), in order to make the statement more concise. According to e.g. \cite{CD-2018,HMC-2006,LL-2007}, the corresponding MFG is:
\begin{equation}\label{model-MFG-power}
	\left\{ \begin{split}
		1.&~\textrm{Fix }(\mu,\nu)\textrm{ in some suitable space};\\
		2.&~\textrm{Solve the optimization problem: }\\
		&~J(\pi,c)=\mathbb E\left[    \frac{1}{\gamma} (X_T\mu^{-\theta}_T)^\gamma 	+\int_0^T\frac{\alpha}{\gamma}\left(	c_sX_s(\nu_s)^{-\theta}	\right)^{\gamma}\,ds	\right]\rightarrow\max \textrm{ over }(\pi,c)\\
		&~\textrm{such that } dX_t=\pi_tX_t(h_t\,dt+\sigma_t\,dW_t+\sigma^0_t\,dW^0_t)-c_tX_t\,dt,~X_0=x;\\
		3.&~\textrm{Search for the fixed point such that }\\
		&~  \mu_t=\exp\left(\mathbb E[\log X^*_t|\mathcal F^0_t]\right)\textrm{ and }\nu_t=\exp\left(\mathbb E[\log c^*_t|\mathcal F^0_t]+\mathbb E[\log X^*_t|\mathcal F^0_t]	\right),\quad t\in[0,T],\\
		&~\textrm{where }X^*\textrm{ and }c^*\textrm{ are the optimal wealth and optimal consumption rate from }2.
	\end{split}\right.
\end{equation}
\textbf{Our contributions.} Our paper has two contributions. First, under general market parameters, we provide a one-to-one correspondence between each NE of \eqref{model-MFG-power} and each solution to some FBSDE. The FBSDE is further proved to be equivalent to some BSDE, which completely characterizes the NE of the MFG \eqref{model-MFG-power}. Specifically, the optimal consumption rate and the optimal investment rate for the representative player are characterized by the $Y$-component and the $Z$-component of the BSDE, respectively. In order to establish the equivalence, we need to prove two sides. On the one hand, for each NE of the MFG \eqref{model-MFG-power}, we will prove that there exists an FBSDE such that the NE can be characterized by this FBSDE. On the other hand, for each solution to this FBSDE, we will prove that the solution corresponds to an NE of the MFG \eqref{model-MFG-power}. The former claim can be proved by DPP, and the latter claim can be proved by MOP. 
Second, when market parameters do not depend on the Brownian paths, we explicitly solve the BSDE characterizing the NE. In particular, we obtain the unique NE in closed form. The assumption on market parameters implies that the BSDE reduces to an ODE; the $Z$-component is zero. Thus, the optimal investment rate can be obtained, since it is completely characterized by the $Z$-component   following our first contribution. The optimal consumption rate is the unique solution to a Riccati equation, which is derived from the above ODE. We emphasize that the uniqueness result strongly relies on the one-to-one correspondence established in the first main contribution.

{\bf Connections with existing literature.}
From a methodology perspective, our paper shares similarities with \cite{Cheridito2011,ET-2015,FR-2011}. Specifically,
 \cite{Cheridito2011} used MOP to solve utility maximization problems with both investment and consumption. We apply a similar argument to our portfolio game and prove that each solution to some mean field FBSDE yields an NE of \eqref{model-MFG-power}. A key difference of the MOP in \cite{Cheridito2011} and in our paper is the choice of strategies. We claim that the strategies used in \cite{Cheridito2011} is not suitable to portfolio games; refer to \cite[Remark 2.2]{Fu-2021}. The DPP we use is adapted from \cite{ET-2015,FR-2011}, where all players traded common stocks and there was no consumption. This is a key step to prove the uniqueness result. Note that \cite{RP-2021b} also obtained a uniqueness result for portfolio games with consumption under forward utilities, using MOP implicitly. However, the uniqueness result in \cite{RP-2021b} was implied by the definition of, especially the (super)martingale properties of forward utilities. Thus, it does not imply the uniqueness result for games with classical utilities functions.
Admittedly, the most similar paper to the current one is \cite{Fu-2021}, where we studied mean field portfolio games with only investment in a general framework. We also used DPP and MOP to establish the one-to-one correspondence. The current paper can be considered as a continuation of \cite{Fu-2021}.
From a modeling perspective, our paper is similar to 
\cite{LS-2020}; when all market parameters become time-independent, our model reduces to the one in \cite{LS-2020}. Using a PDE approach, \cite{LS-2020} obtained a strong equilibrium, which was proved to be unique among all strong ones. By our more probabilistic approach, we conclude that the strong equilibrium obtained in \cite{LS-2020} is unique in the essentially bounded space. Thus, we answered the question proposed in \cite{LS-2020}.

The rest of the paper is organized as follows. After the introduction of notation, in Section \ref{sec:equivalence}, we establish an equivalent relationship between each solution to some mean field (F)BSDE and each NE of the MFG \eqref{model-MFG-power} with general market parameters. In Section \ref{sec:closed-form-NE}, assuming that the market parameters do not depend on the Brownian paths, we construct the unique NE in closed form. 

\textbf{Notation.} Let $(W,W^0)$ be a two dimensional Brownian motion, defined on a probability space $(\Omega,\mathbb P)$. Here, $W$ denotes the idiosyncratic noise for the representative player, and $W^0$ denotes the common noise for all players. Moreover, $\mathbb G=\{ \mathcal G_t, t\in[0,T]\}$ is assumed to be the augmented natural filtration of $(W,W^0)$. The augmented natural filtration of $W^0$ is denoted by $\mathbb F^0=\{\mathcal F^0_t,t\in[0,T]\}$. To allow for additional heterogeneity across players, we let $\mathcal A$ be a $\sigma$-algebra that is independent of $\mathbb G$. Let $\mathbb F=\{\mathcal F_t, t\in[0,T]	\}$ be the $
\sigma$-algebra generated by $\mathcal A$ and $\mathbb G$.

Denote $\textrm{Prog}(\Omega\times[0,T])$ the space of all stochastic processes that are $\mathcal F$-progressively measurable. For each $\eta\in \textrm{Prog}(\Omega\times[0,T])$, define $\|\eta\|_\infty=\esssup_{\omega\in\Omega,t\in[0,T]} |\eta_t(\omega)| $. Let $L^\infty$ be the space of all essentially bounded stochastic processes, i.e.,
\[
L^\infty=\{	\eta\in\textrm{Prog}(\Omega\times[0,T]): \|\eta\|_\infty<\infty	\}.
\] 
Define the BMO space under $\mathbb P$ as
\[
H^2_{BMO}=\left\{ \eta\in\textrm{Prog}(\Omega\times[0,T]):		\|\eta\|_{BMO}^2:= \sup_{\tau:\mathcal F-\textrm{stopping time}}\left\|\mathbb E^{}\left[ \left.	\int_\tau^T |\eta_t|^2\,dt	\right|\mathcal F_\tau	\right]\right\|_\infty	<\infty	\right\}.
\]
For each positive random variable $\xi$, denote $\widehat\xi:=\log(\xi)$.

Let $C$ be a generic positive constant, which may vary from line to line.

\section{Equivalence between the MFG \eqref{model-MFG-power} and Some (F)BSDE}\label{sec:equivalence}

We assume the space of admissible strategies for the representative player is $L^\infty\times L^\infty$.\footnote{The space $L^\infty$ to accommodate investment rates is smaller than $H^2_{BMO}$, which is commonly used in the literature. We use $L^\infty$ as our admissible space for technical purpose; see the estimates in the proof of Theorem \ref{thm:NE-FBSDE-power}. However, we do not lose much generality because the closed form investment rate constructed in Section \ref{sec:closed-form-NE} stays in $L^\infty$.}
Moreover, we say that the tuple $(\mu^*,\nu^*,\pi^*,c^*)$ is an NE of the MFG \eqref{model-MFG-power}, if $(\pi^*,c^*)$ is admissible, with $X^*$ being the corresponding wealth, $\mu^*_t=\exp\left(\mathbb E[\log X^*_t|\mathcal F^0_t]\right)$ and $\nu^*_t=\exp\left(	\mathbb E[\log(c^*_tX^*_t)|\mathcal F^0_t]	\right)$, $t\in[0,T]$, and if the optimality condition holds for each admissible strategy $(\pi,c)$: 
$$\mathbb E\left[    \frac{1}{\gamma} (X^*_T(\mu^*_T)^{-\theta})^\gamma 	+\int_0^T\frac{\alpha}{\gamma}\left(	c^*_sX^*_s(\nu^*_s)^{-\theta}	\right)^{\gamma}\,ds	\right]\geq \mathbb E\left[    \frac{1}{\gamma} (X_T(\mu^*_T)^{-\theta})^\gamma 	+\int_0^T\frac{\alpha}{\gamma}\left(	c_sX_s(\nu^*_s)^{-\theta}	\right)^{\gamma}\,ds	\right].$$ 

Throughout the paper, the following assumptions are in force. 

\textbf{Assumption 1.} 
The initial wealth $x$, risk aversion parameter $\gamma$, competition parameter $\theta$, and weight parameter $\alpha$ of the population are assumed to be bounded  $\mathcal A$-random variables. Moreover, $x$ and $\alpha$ are $\mathbb R_+$-valued, $\gamma$ is valued in $(-\infty,1)/\{ 0\}$ and $\theta$ is valued in $[0,	1]$.

Assume the return rate $h\in L^\infty$ and the volatilities $(\sigma,\sigma^0)\in L^\infty\times L^\infty$. Moreover, $|\gamma|$ and $|\sigma|+|\sigma^0|$ are bounded away from $0$, i.e., there exist positive constants  $\underline\gamma$ and $\underline\sigma$ such that  $|\gamma|\geq \underline\gamma>0$ a.s. and $|\sigma|+|\sigma^0|\geq \underline\sigma>0$ a.s. a.e..

\subsection{MFGs and Mean Field FBSDEs Are Equivalent}\label{sec: characterization}
In this section, we prove that the solvability of a mean field FBSDE is sufficient and necessary for the existence of an NE of the MFG \eqref{model-MFG-power}. First, 
using MOP as in \cite{Cheridito2011,HIM-2005}, we prove that the value function and optimal strategy of the associated optimization problem in \eqref{model-MFG-power} have a one-to-one correspondence to the solution of a BSDE. 
\begin{proposition}\label{prop:BSDE-optim}
	For fixed $(\mu,\nu)$, the value function $V$ and the unique optimal strategy $(\pi^*,c^*)$ of the associated optimization problem in \eqref{model-MFG-power} are given by
	\begin{equation}\label{optimal-strategy-optim}
		V = \frac{1}{\gamma}e^{\gamma\widehat X_0+Y_0},\qquad \pi^*_t= 	\frac{  h_t+\sigma_t  Z_t+\sigma^0_t  Z^0_t		}{(1-\gamma)(\sigma^2_t+(\sigma^0_t)^2)},\quad c^*_t=\alpha^{\frac{1}{1-\gamma}} e^{ -\frac{1}{1-\gamma}(\theta\gamma\widehat\nu_t+Y_t)  },\quad t\in[0,T],
	\end{equation}
	where $(Y,Z,Z^0)$ satisfies the following BSDE
	\begin{equation}\label{BSDE-optim}
		\left\{\begin{split}
				-dY_t=&~\Bigg\{\frac{Z^2_t+(Z^0_t)^2}{2}+\frac{\gamma}{2(1-\gamma)}\frac{ (h_t+\sigma_t Z_t+\sigma^0_tZ^0_t)^2    }{\sigma^2_t+(\sigma^0_t)^2}+(1-\gamma)\big\{\alpha e^{-(Y_t+\theta\gamma\widehat\nu_t)}\big\}^{\frac{1}{1-\gamma}} \Bigg\}\,dt\\
	&~-Z_t\,dW_t-Z_t^0\,dW^0_t,\\
	\widehat Y_T=&~-\gamma\theta\widehat\mu_T.
	\end{split}\right.
	\end{equation}
Here, we recall that $\widehat X_0=\log X_0$, $\widehat\mu_t=\log\mu_t$ and $\widehat\nu_t=\log\nu_t$, $t\in[0,T]$. 
\end{proposition}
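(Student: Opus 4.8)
The plan is to apply the martingale optimality principle (MOP), as in \cite{Cheridito2011,HIM-2005}. Fixing an admissible pair $(\pi,c)$ with associated wealth $X$, and letting $(Y,Z,Z^0)$ denote a solution of \eqref{BSDE-optim}, I would introduce the family
\begin{equation*}
R^{\pi,c}_t:=\frac1\gamma\exp\big(\gamma\widehat X_t+Y_t\big)+\int_0^t\frac{\alpha}{\gamma}\big(c_sX_s\nu_s^{-\theta}\big)^{\gamma}\,ds,\qquad t\in[0,T].
\end{equation*}
The terminal condition $Y_T=-\gamma\theta\widehat\mu_T$ makes $R^{\pi,c}_T$ coincide with the argument of the expectation defining $J(\pi,c)$, whereas $R^{\pi,c}_0=\frac1\gamma e^{\gamma\widehat X_0+Y_0}$ is independent of $(\pi,c)$ since $X_0=x$ and $Y_0$ is fixed by the BSDE. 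The goal is to show that $R^{\pi,c}$ is a supermartingale for every admissible $(\pi,c)$ and a genuine martingale for the candidate $(\pi^*,c^*)$; granting this, $J(\pi,c)=\mathbb E[R^{\pi,c}_T]\le R^{\pi,c}_0=V$ with equality at $(\pi^*,c^*)$, which simultaneously gives the value-function formula and the optimality of $(\pi^*,c^*)$.

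The core computation is It\^o's formula. Using $d\widehat X_t=(\pi_th_t-c_t-\tfrac12\pi_t^2(\sigma_t^2+(\sigma^0_t)^2))\,dt+\pi_t\sigma_t\,dW_t+\pi_t\sigma^0_t\,dW^0_t$ together with the dynamics of $Y$, one finds that the drift of $R^{\pi,c}$ equals $\frac1\gamma e^{\gamma\widehat X_t}\mathcal D_t$, where $\mathcal D_t$ depends on the controls only through a quadratic $Q_t(\pi):=\gamma(h_t+\sigma_tZ_t+\sigma^0_tZ^0_t)\pi-\tfrac{\gamma(1-\gamma)}{2}(\sigma_t^2+(\sigma^0_t)^2)\pi^2$ entering with a factor $e^{Y_t}$, and a consumption term $g_t(c):=-\gamma e^{Y_t}c+\alpha e^{-\theta\gamma\widehat\nu_t}c^{\gamma}$. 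The first-order (stationarity) conditions for $Q_t$ in $\pi$ (nondegenerate since $\gamma\in(-\infty,1)\setminus\{0\}$ and $\sigma_t^2+(\sigma^0_t)^2$ is bounded away from $0$) and for $g_t$ in $c>0$ produce exactly $\pi^*_t=\frac{h_t+\sigma_tZ_t+\sigma^0_tZ^0_t}{(1-\gamma)(\sigma_t^2+(\sigma^0_t)^2)}$ and $c^*_t=\alpha^{1/(1-\gamma)}e^{-(\theta\gamma\widehat\nu_t+Y_t)/(1-\gamma)}$. The driver of \eqref{BSDE-optim} is, by construction, the value that makes $\mathcal D_t$ vanish at $(\pi^*_t,c^*_t)$: I would verify that the stationary value of the quadratic reproduces $\frac{\gamma}{2(1-\gamma)}\frac{(h_t+\sigma_tZ_t+\sigma^0_tZ^0_t)^2}{\sigma_t^2+(\sigma^0_t)^2}$ and that $e^{-Y_t}g_t(c^*_t)=(1-\gamma)\{\alpha e^{-(Y_t+\theta\gamma\widehat\nu_t)}\}^{1/(1-\gamma)}$, so that $\mathcal D_t=e^{Y_t}(Q_t(\pi_t)-Q_t(\pi^*_t))+(g_t(c_t)-g_t(c^*_t))$.

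With this representation the supermartingale property follows from a sign analysis split on $\gamma$. For $\gamma\in(0,1)$ the maps $Q_t$ and $g_t$ are concave, so both brackets above are nonpositive; since $\frac1\gamma>0$ and $e^{\gamma\widehat X_t},e^{Y_t}>0$, the drift is $\le0$. For $\gamma<0$ the maps are convex, so both brackets are nonnegative, but now $\frac1\gamma<0$, and the two sign reversals compensate to again give drift $\le0$. Hence $R^{\pi,c}$ is a local supermartingale whose drift vanishes exactly at $(\pi^*,c^*)$. Because $Q_t$ and $g_t$ are strictly concave (for $\gamma\in(0,1)$) or strictly convex (for $\gamma<0$), the equation $\mathcal D_t=0$ has the unique root $(\pi^*_t,c^*_t)$; since any optimal control forces $R$ to be a martingale and hence $\mathcal D_t=0$ a.e., the optimal strategy is unique.

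The main obstacle is not this algebra but the integrability and admissibility bookkeeping that converts local statements into global ones. First I would produce a solution of \eqref{BSDE-optim} with $Y\in L^\infty$ and $Z,Z^0\in H^2_{BMO}$: the terminal datum $-\gamma\theta\widehat\mu_T$ is bounded, the driver has quadratic growth in $(Z,Z^0)$, and the nonlinearity $\{\alpha e^{-(Y+\theta\gamma\widehat\nu)}\}^{1/(1-\gamma)}$ is bounded once $Y$ and $\widehat\nu$ are, so quadratic-BSDE theory with a priori $L^\infty$ bounds on $Y$ applies. Next, for $(\pi^*,c^*)$ to lie in the admissible class $L^\infty\times L^\infty$ one needs $Z,Z^0$ bounded, not merely in BMO; this is the delicate point behind taking $L^\infty$ controls, and I note that in the path-independent setting of Section \ref{sec:closed-form-NE} one has $Z\equiv0$, so $\pi^*$ is automatically bounded. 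Finally, the stochastic integral in $R^{\pi,c}$, whose integrands carry the coefficients $\gamma\pi_t\sigma_t+Z_t$ and $\gamma\pi_t\sigma^0_t+Z^0_t$, must be shown to be a true martingale at $(\pi^*,c^*)$ and $R^{\pi,c}$ a genuine supermartingale in general; here I would use the boundedness of $\pi,c,\sigma,\sigma^0$, the BMO property of $Z,Z^0$, and a uniform-integrability (or localisation plus Fatou) argument to control the exponential factor $e^{\gamma\widehat X_t+Y_t}$.
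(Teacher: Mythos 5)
Your proposal is correct and takes essentially the same route as the paper's proof: the identical martingale-optimality family $R^{\pi,c}$, the same It\^o computation isolating a quadratic in $\pi$ and a strictly concave/convex term in $c$, with the driver of \eqref{BSDE-optim} chosen precisely to annihilate the drift at $(\pi^*,c^*)$, yielding supermartingale/martingale/initial-value-independence. The remaining differences are cosmetic: the paper completes the square instead of splitting on the sign of $\gamma$, deduces uniqueness from concavity of $J$ rather than from strict concavity/convexity of the drift terms, and, like you, treats the solution $(Y,Z,Z^0)$ of \eqref{BSDE-optim} as given, so your added existence discussion (and the admissibility caveat on $(Z,Z^0)$, which the paper defers to Theorem \ref{thm:NE-FBSDE-power}) is not needed for the statement itself.
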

\begin{proof}
	The proof is a modification of that in \cite{Cheridito2011}. As discussed in the introduction, the essential difference lies in the choice of strategies; our choice is appropriate to the game-theoretic model in this paper.
	
	Let $(Y,Z,Z^0)$ be a solution to \eqref{BSDE-optim}. We will prove that $(\pi^*,c^*)$ defined in \eqref{optimal-strategy-optim} is an optimal strategy of the associated optimization problem in \eqref{model-MFG-power}, by fixing $(\mu,\nu)$. To do so, for each strategy $(\pi,c)\in L^\infty\times L^\infty$, define $$R^{\pi,c}_t=\frac{1}{\gamma}  e^{\gamma\widehat X^{\pi,c}_t+Y_t}+\int_0^t\frac{\alpha}{\gamma}e^{\gamma\widehat c_s+\gamma\widehat X^{\pi,c}_s-\theta\gamma\widehat\nu^*_s}\,ds.$$ 
	We will prove that $R^{\pi,c}$ satisfies the following three items:
	\begin{equation}\label{claim-MOP-optim}
		\begin{split}
			1. &\quad 
			\textrm{for any }(\pi,c), ~R^{\pi,c}\textrm{ is a supermartingale};\\
			2. &\quad R^{\pi^*,c^*} \textrm{ is a martingale for } (\pi^*,c^*) \textrm{ in }\eqref{optimal-strategy-optim};\\
			3. &\quad R^{\pi,c}_0 \textrm{ is independent of }(\pi,c).
		\end{split}
	\end{equation}
	If the claim \eqref{claim-MOP-optim} is true, it holds that $$\mathbb E[R^{\pi,c}_T]\leq R_0^{\pi,c}=R_0^{\pi^*,c^*}=\mathbb E[R^{\pi^*,c^*}_T].$$
	Thus,
	$(\pi^*,c^*)$ is optimal. 

		It remains to prove the claim \eqref{claim-MOP-optim}. Denote by $f(Y,Z,Z^0)$ the driver of $Y$ in \eqref{BSDE-optim}. Applying It\^o's formula to $R^{\pi,c}$, we get
	\begin{equation*}
		\begin{split}
			dR^{\pi,c}_t
			=&~X^\gamma_te^{Y_t} \Bigg\{ \bigg(	-\frac{1-\gamma}{2}(\sigma^2_t+(\sigma^0_t)^2)\pi^2_t+(h_t+\sigma_tZ_t+\sigma^0_tZ^0_t)\pi_t + \frac{Z^2_t+(Z^0_t)^2}{2\gamma}	-\frac{f_t(Y_t,Z_t,Z^0_t)}{\gamma}\bigg)\,dt	\\
			&~\qquad\qquad+\bigg(	-c_t+\frac{\alpha}{\gamma} e^{-Y_t}(c_t(\nu^*_t)^{-\theta})^\gamma		\bigg)\,dt		+\frac{\gamma\pi_t\sigma_t+Z_t}{\gamma}\,dW_t+\frac{\gamma\pi_t\sigma^0_t+Z^0_t}{\gamma}\,dW^0_t					\Bigg\}\\
			=&~X_t^\gamma e^{Y_t} \Bigg\{		\bigg(	-\frac{1-\gamma}{2}\left(	\sigma^2_t+(\sigma^0_t)^2		\right)\bigg\{	 \pi_t-\frac{	h_t+\sigma_t Z_t+\sigma^0_tZ^0_t	}{(1-\gamma)(	\sigma^2_t+(\sigma^0_t)^2	)}			\bigg\}^2\bigg)\,dt	\\
			&~\qquad\qquad +  \bigg(-c_t+\frac{\alpha}{\gamma} e^{-Y_t}(c_t(\nu^*_t)^{-\theta})^\gamma	-\frac{1-\gamma}{\gamma}\left\{ \alpha e^{-Y_t}(\nu^*_t)^{-\theta\gamma}\right	\}^{\frac{1}{1-\gamma}} \bigg)\,dt\\
			&~\qquad\qquad +\frac{\gamma\pi_t\sigma_t+Z_t}{\gamma}\,dW_t+\frac{\gamma\pi_t\sigma^0_t+Z^0_t}{\gamma}\,dW^0_t	\Bigg\}.
		\end{split}
	\end{equation*}
	Note that for all $(\pi,c)$ the drift of $R^{\pi,c}$ is non-positive, and the drift of $R^{\pi^*,c^*}$ is zero for $(\pi^*,c^*)$ in \eqref{optimal-strategy-optim}.
	Thus, the claim \eqref{claim-MOP-optim} is proved.
	
	Since $(\pi,c)\mapsto J(\pi,c)$ is concave, $(\pi^*,c^*)$ in \eqref{optimal-strategy-optim} is unique.
\end{proof}
The following theorem, which is the main result of this section, establishes the necessary and sufficient conditions for the solvability of our MFG \eqref{model-MFG-power}. The sufficient condition is a corollary of Proposition \ref{prop:BSDE-optim}.
 In order to prove the necessary part, we rely on the dynamic programming principle as in \cite[Lemma 4.4]{ET-2015} and \cite[Lemma 3.2]{FR-2011}, where the $N$-player game with exponential utility functions and trading constraint but without individual noise was considered. In the next theorem, we adapt the argument to our MFG \eqref{model-MFG-power}. 

\begin{theorem}\label{thm:NE-FBSDE-power}
$\bm{(1)}$ Let $(\mu^*,\nu^*,\pi^*,c^*) $ be an NE of the MFG \eqref{model-MFG-power}, such that 
	\begin{equation}\label{necessary:reverse-power}
		\mathbb E\left[	\frac{1}{\gamma} e^{\gamma (\widehat X^{*}_T-\theta\widehat\mu^*_T)}+\int_0^T\frac{\alpha}{\gamma}e^{\gamma\widehat c^*_s+\gamma\widehat X^*_s-\theta\gamma\widehat\nu^*_s}\,ds \Big|\mathcal F_\cdot	\right] \textrm{ satisfies }R_p\textrm{ for some }p>1.\footnote{The reverse H\"older inequality $R_p$ is defined in Appendix \ref{app:reverse}.}
	\end{equation}	
Then this NE must satisfy for $t\in[0,T]$
 	\begin{equation}\label{FBSDE-characterization}
 	\left\{\begin{split}
 		\widehat\mu^*_t=&~\mathbb E[\widehat X^*_t|\mathcal F^0_t],\\
 		\widehat\nu^*_t=&~\frac{\mathbb E[\widehat X^*_t|\mathcal F^0_t]}{ 1+\mathbb E\left[	\frac{\theta\gamma}{1-\gamma}	\right] }+\frac{\mathbb E\left[ \frac{\log\alpha}{1-\gamma} \right]}{ 1+\mathbb E\left[  \frac{\theta\gamma}{1-\gamma}	\right] } -\frac{\mathbb E\left[ \frac{Y_t}{1-\gamma}	|\mathcal F^0_t\right]}{1+\mathbb E\left[ \frac{\theta\gamma}{1-\gamma} \right]},\\
 		\pi^*_t=&~\frac{h_t+\sigma_t Z_t+\sigma^0_tZ^0_t}{(1-\gamma)(  \sigma^2_t+(\sigma^0_t)^2  )},\\
 		c^*_t=&~\left(	\alpha e^{-Y_t}(\nu^*_t)^{-\theta\gamma}	\right)^{\frac{1}{1-\gamma}},
 	\end{split}\right.
 \end{equation}
where $(\widehat X^*,Y,Z,Z^0)$ satisfies the following mean field FBSDE  
	\begin{equation}\label{FBSDE-power}
		\left\{\begin{split}
		d\widehat X^*_t=&~\bigg\{ \pi^*_t h_t-c^*_t-\frac{1}{2}(\pi^*_t)^2(\sigma^2_t+(\sigma^0_t)^2)\bigg\}\,dt+ \pi^*_t\sigma_t\,dW_t+\pi^*_t\sigma^0_t\,dW^0_t,\\
			-dY_t=&~\Bigg\{\frac{Z^2_t+(Z^0_t)^2}{2}+\frac{\gamma}{2(1-\gamma)}\frac{ (h_t+\sigma_t Z_t+\sigma^0_tZ^0_t)^2    }{\sigma^2_t+(\sigma^0_t)^2}+(1-\gamma)\big\{\alpha e^{-Y_t}(\nu^*_t)^{-\theta\gamma}\big\}^{\frac{1}{1-\gamma}} \Bigg\}\,dt\\
			&~-Z_t\,dW_t-Z_t^0\,dW^0_t,\\
			\widehat X^*_0=&~\log(x),~Y_T=-\gamma\theta\widehat\mu^*_T.
			\end{split}\right.
		\end{equation}
Here, we recall $\widehat\mu^*_t=\log\mu^*_t$, $\widehat\nu^*_t=\log\nu^*_t$, $\widehat c^*_t=\log c^*_t$ and $\widehat X^*_t=\log X^*_t$, $t\in[0,T]$.

$\bm{(2)}$ If there exists a solution to the FBSDE \eqref{FBSDE-power} with $(\mu^*,\nu^*,\pi^*,c^*)$ defined in \eqref{FBSDE-characterization} such that $(Z,Z^0)\in L^\infty\times L^\infty$ and $Y+\theta\gamma\mathbb E[\widehat X^*|\mathcal F^0]\in L^\infty$, then $(\mu^*,\nu^*,\pi^*,c^*)$ in \eqref{FBSDE-characterization} is an NE of the MFG \eqref{model-MFG-power}  
such that \eqref{necessary:reverse-power} holds. 
	
\end{theorem}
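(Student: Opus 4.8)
The plan is to treat the two implications separately, handling the sufficient direction $\bm{(2)}$ first since it is essentially a corollary of Proposition \ref{prop:BSDE-optim}. Given a solution $(\widehat X^*,Y,Z,Z^0)$ of the FBSDE \eqref{FBSDE-power}, I would define $(\mu^*,\nu^*,\pi^*,c^*)$ through \eqref{FBSDE-characterization} and first check the consistency of the fixed-point relations. The identity $\widehat\mu^*_t=\mathbb E[\widehat X^*_t|\mathcal F^0_t]$ is the first line of \eqref{FBSDE-characterization} by definition, while the formula for $\widehat\nu^*_t$ is obtained by imposing the equilibrium requirement $\widehat\nu^*_t=\mathbb E[\widehat c^*_t|\mathcal F^0_t]+\mathbb E[\widehat X^*_t|\mathcal F^0_t]$, substituting $\widehat c^*_t=\frac{1}{1-\gamma}(\log\alpha-Y_t-\theta\gamma\widehat\nu^*_t)$ from the fourth line, taking $\mathcal F^0_t$-conditional expectation, and using that $\widehat\nu^*_t$ is $\mathcal F^0_t$-measurable whereas $(\alpha,\gamma,\theta)$ are $\mathcal A$-measurable and hence independent of $\mathcal F^0_t$; solving the resulting linear equation for $\widehat\nu^*_t$ reproduces exactly the second line of \eqref{FBSDE-characterization}. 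The integrability hypotheses $(Z,Z^0)\in L^\infty\times L^\infty$, $h,\sigma,\sigma^0\in L^\infty$ with $|\sigma|+|\sigma^0|\geq\underline\sigma$, and $Y+\theta\gamma\mathbb E[\widehat X^*|\mathcal F^0]\in L^\infty$ then yield $(\pi^*,c^*)\in L^\infty\times L^\infty$, so the strategy is admissible. Finally, since the $Y$-equation in \eqref{FBSDE-power} is precisely the BSDE \eqref{BSDE-optim} with $\nu$ frozen at $\nu^*$ (because $\alpha e^{-Y_t}(\nu^*_t)^{-\theta\gamma}=\alpha e^{-(Y_t+\theta\gamma\widehat\nu^*_t)}$), Proposition \ref{prop:BSDE-optim} gives directly that $(\pi^*,c^*)$ is the optimal response to $(\mu^*,\nu^*)$, which combined with the verified fixed point is exactly an NE. To close $\bm{(2)}$ I would identify the process in \eqref{necessary:reverse-power} with the martingale $R^{\pi^*,c^*}$ from the proof of Proposition \ref{prop:BSDE-optim} (using $Y_T=-\theta\gamma\widehat\mu^*_T$) and deduce $R_p$ from the boundedness of $Y+\theta\gamma\widehat\mu^*$, $Z$ and $Z^0$.

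For the necessary direction $\bm{(1)}$, I would adapt the dynamic programming argument of \cite[Lemma 4.4]{ET-2015} and \cite[Lemma 3.2]{FR-2011}. Fixing the equilibrium indices $(\mu^*,\nu^*)$, introduce the dynamic value process $\tilde V_t=\esssup_{(\pi,c)}\mathbb E[\frac{1}{\gamma}(X_T(\mu^*_T)^{-\theta})^\gamma+\int_t^T\frac{\alpha}{\gamma}(c_sX_s(\nu^*_s)^{-\theta})^\gamma\,ds\,|\,\mathcal F_t]$, the supremum being over admissible continuations on $[t,T]$ from the current wealth. Exploiting the positive $\gamma$-homogeneity of both the wealth dynamics and the objective, I would factor $\tilde V_t=\frac{1}{\gamma}(X^*_t)^\gamma e^{Y_t}$, which defines $Y$ and forces the terminal value $e^{Y_T}=(\mu^*_T)^{-\theta\gamma}$, i.e. $Y_T=-\theta\gamma\widehat\mu^*_T$. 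The DPP then states that $\frac{1}{\gamma}(X^{\pi,c}_t)^\gamma e^{Y_t}+\int_0^t\frac{\alpha}{\gamma}(c_sX_s(\nu^*_s)^{-\theta})^\gamma\,ds$ is a supermartingale for every admissible $(\pi,c)$ and a martingale along the equilibrium $(\pi^*,c^*)$. Representing $Y$ as a semimartingale with martingale part $Z\,dW+Z^0\,dW^0$, applying It\^o to this process and comparing drifts, the martingale property at the optimum together with the supermartingale property under perturbed controls yields, via the pointwise first-order conditions in $\pi$ and $c$, both the BSDE driver in \eqref{FBSDE-power} and the feedback forms $\pi^*_t=\frac{h_t+\sigma_tZ_t+\sigma^0_tZ^0_t}{(1-\gamma)(\sigma_t^2+(\sigma_t^0)^2)}$ and $c^*_t=(\alpha e^{-Y_t}(\nu^*_t)^{-\theta\gamma})^{1/(1-\gamma)}$. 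The fixed-point lines of \eqref{FBSDE-characterization} then follow: $\widehat\mu^*_t=\mathbb E[\widehat X^*_t|\mathcal F^0_t]$ by definition of the NE, and the expression for $\widehat\nu^*_t$ by the same linear solve as in $\bm{(2)}$.

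The main obstacle is the rigorous justification underlying $\bm{(1)}$: showing that $\tilde V$ (equivalently $Y$) is a genuine semimartingale to which It\^o's formula and martingale representation legitimately apply, with $(Z,Z^0)$ of the correct integrability, and that the essential-supremum DPP characterization indeed holds. This is exactly where hypothesis \eqref{necessary:reverse-power} enters: the reverse H\"older inequality $R_p$ supplies the uniform integrability and moment bounds needed to make the conditional-expectation family a well-behaved supermartingale system and to legitimize the representation of $Y$. Tracking the sign of $\gamma$, so that $\frac1\gamma(\cdot)^\gamma$ remains concave and the first-order conditions are genuinely optimizing in both regimes $\gamma\in(0,1)$ and $\gamma<0$, is a secondary technical point that must be carried through all the estimates.
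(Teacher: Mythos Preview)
Your plan for $\bm{(2)}$ is exactly the paper's: invoke Proposition~\ref{prop:BSDE-optim} for optimality, and recover the second line of \eqref{FBSDE-characterization} by the linear solve you describe.

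For $\bm{(1)}$ the architecture also matches the paper---define the DPP process $M^{\pi,c}$, factor out $(X^{\pi,c})^\gamma$ to isolate $Y$, compute $dM^{\pi,c}$, and read off the BSDE driver and feedback forms from the drift---but your localization of the obstacle is slightly off. The semimartingale property of $Y$ is \emph{not} where the difficulty lies: once DPP yields that $M^*:=M^{\pi^*,c^*}$ is a continuous martingale, martingale representation applies to $M^*$ directly and produces integrands $(\breve Z,\breve Z^0)$, from which the semimartingale decomposition of $Y$ follows automatically. The genuine technical step, and the precise place where \eqref{necessary:reverse-power} enters, is showing that the stochastic-integral part of $dM^{\pi,c}$ is a \emph{true} martingale for every admissible $(\pi,c)$; without this, the supermartingale property of $M^{\pi,c}$ only forces the drift to be nonpositive up to a local-martingale correction, and the pointwise first-order conditions do not follow. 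The paper handles this by noting that $R_p$ for $M^*$ is equivalent, via Kazamaki's theorem, to the representing process of $M^*$ lying in $H^2_{BMO}$; the energy inequality then controls $\mathbb E[(\int|\mathring Z|^2\,dt)^p]$, and a chain of H\"older and Doob maximal inequalities (exploiting that $\pi^*,c^*,\pi,c$ and all coefficients are in $L^\infty$) closes the square-integrability estimate. This is the substantive analytic content of $\bm{(1)}$, and your proposal does not yet isolate it.
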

\begin{proof}
$\bm{(1)}$ Let $(\mu^*,\nu^*,\pi^*,c^*)$ be an NE of \eqref{model-MFG-power} such that \eqref{necessary:reverse-power} holds. For each $(\pi,c)\in L^\infty\times L^\infty$, define
\begin{equation*}
	\begin{split}
M^{\pi,c}_t=&~e^{\gamma\widehat X^{\pi,c}_t} \esssup_{(\kappa,b)\in L^\infty\times L^\infty} \mathbb E\left[ \frac{1}{\gamma}e^{\gamma(\widehat X_T^{\kappa,b}-\widehat X^{\kappa,b}_t-\theta\widehat\mu^*_T)} +\int_t^T\frac{\alpha}{\gamma}e^{\gamma\widehat b_s+\gamma(\widehat X^{\kappa,b}_s-\widehat X_t^{\kappa,b})-\theta\gamma\widehat\nu^*_s}\,ds	\Big|\mathcal F_t		\right]\\
&~ + \int_0^t\frac{\alpha}{\gamma}e^{\gamma\widehat c_s+\gamma\widehat X^{\pi,c}_s-\theta\gamma\widehat\nu^*_s}\,ds,
	\end{split}
\end{equation*}
where $X^{\kappa,b}$ denotes the wealth process associated with the investment-consumption pair $(\kappa,b)$. 
Following the argument in \cite[Lemma 4.4]{ET-2015} and \cite[Lemma 3.2]{FR-2011}, $M^{\pi,c}$ has a continuous version which is a supermartingale for all $(\pi,c)$ and a martingale for $(\pi^*,c^*)$. Denote $\widehat X^*:=\widehat X^{\pi^*,c^*}$ and $ M^*:= M^{\pi^*,c^*}$.
Our goal is to get an SDE for $M^{\pi,c}$, and by the supermartingale property of $M^{\pi,c}$ and martingale property of $M^{*}$ we link $(\pi^*,c^*)$ to some FBSDE. We will achieve the goal by the following steps.

\textbf{Step 1: representation of $M^*$ and $M^{\pi,c}$.}
Note that $M^{*}_\cdot-\int_0^\cdot \frac{\alpha}{\gamma}e^{\gamma\widehat c^*_s+\gamma\widehat X^{*}_s-\theta\gamma\widehat\nu^*_s}\,ds  \neq 0$ a.s.. Thus, martingale representation theorem yields $(\breve Z,\breve Z^0)$ such that
\begin{equation}\label{martingale-representation}
	dM^{*}_t=\left(	  	M^{ *}_t	- \int_0^t \frac{\alpha}{\gamma}e^{\gamma\widehat c^*_s+\gamma\widehat X^{*}_s-\theta\gamma\widehat\nu^*_s}\,ds				\right)(\breve Z_t\,dW_t+\breve Z^0_t\,dW^0_t).
\end{equation}
Moreover, since $M^*$ is a positive martingale, \eqref{necessary:reverse-power} and Lemma \ref{lem:reverse} yield $(\mathring Z,\mathring Z^0)\in H^2_{BMO}\times H^2_{BMO}$ such that
\begin{equation}\label{representation-M*-2}
dM^{*}_t=M^{*}_t\big\{	\mathring Z_t\,dW_t+\mathring Z^0_t\,dW^0_t  	\big\},
\end{equation}
which together with \eqref{martingale-representation} implies that 
\begin{equation}\label{relation-Z-M*}
(\breve Z_t,\breve Z^0_t)=\frac{M^{*}_t}{ 	  	M^{*}_t	- \int_0^t \frac{\alpha}{\gamma}e^{\gamma\widehat c^*_s+\gamma\widehat X^{*}_s-\theta\gamma\widehat\nu^*_s}\,ds				 }( \mathring Z_t,\mathring Z^0_t  ),\qquad a.s.~\omega\in\Omega,~~a.e.~ t\in[0,T].
\end{equation}
By the definition of $M^{\pi,c}$ and $M^{ *}$, we have
\begin{equation}\label{expression-M}
	\begin{split}
		M^{\pi,c}_t=&~e^{\gamma( \widehat X^{\pi,c}_t - \widehat X^{*}_t )}\left(M^{ *}_t	 -\int_0^t \frac{\alpha}{\gamma}e^{\gamma\widehat c^*_s+\gamma\widehat X^{ *}_s-\theta\gamma\widehat\nu^*_s}\,ds			\right)+\int_0^t  \frac{\alpha}{\gamma}e^{\gamma\widehat c_s+\gamma\widehat X^{\pi,c}_s-\theta\gamma\widehat\nu^*_s}\,ds.
	\end{split}
\end{equation}
Finally, in this step we define a stochastic process $Y$ for later use, which will turn out to be the backward component of the desired FBSDE:
\begin{equation}\label{def:Y-1}
	\begin{split}
		e^{Y_t}=\gamma e^{ - \gamma\widehat X^{ *}_t }\left(M^{ *}_t	 -\int_0^t \frac{\alpha}{\gamma}e^{\gamma\widehat c^*_s+\gamma\widehat X^{ *}_s-\theta\gamma\widehat\nu^*_s}\,ds			\right).
	\end{split}
\end{equation}
Using \eqref{expression-M}, it also holds that
\begin{equation}\label{def:Y-2}
	e^{Y_t}=\gamma e^{ - \gamma\widehat X^{\pi,c}_t }\left(M^{\pi,c}_t	 -\int_0^t \frac{\alpha}{\gamma}e^{\gamma\widehat c_s+\gamma\widehat X^{\pi,c}_s-\theta\gamma\widehat\nu^*_s}\,ds			\right).
\end{equation}

\textbf{Step 2: SDE for $M^{\pi,c}$.} Recall $\widehat X^{\pi,c}$ and $\widehat X^*$ are the log-wealth associated with $(\pi,c)$ and $(\pi^*,c^*)$, respectively. It\^o's formula implies that
\begin{equation}\label{eq:to-supermartingale-1}
	\begin{split}
		&~de^{\gamma( \widehat X^{\pi,c}_t-\widehat X^{*}_t )}\\
		=&~ e^{\gamma( \widehat X^{\pi,c}_t-\widehat X^{*}_t )}\bigg\{	\gamma (\pi_t-\pi^*_t)h_t-\gamma(c_t-c^*_t)-\frac{\gamma}{2}(\pi^2_t-(\pi^*_t)^2)(\sigma^2_t+(\sigma^0_t)^2)\\
		&~	+\frac{\gamma^2}{2} (\pi_t-\pi^*_t)^2(	\sigma^2_t+(\sigma^0_t)^2	)  \bigg\}\,dt+e^{\gamma( \widehat X^{\pi,c}_t-\widehat X^{*}_t )}\bigg\{\gamma\sigma_t(\pi_t-\pi^*_t)\,dW_t+\gamma\sigma^0_t(\pi_t-\pi^*_t)\,dW^0_t\bigg\}.
	\end{split}
\end{equation}
From the expressions \eqref{expression-M} and \eqref{martingale-representation}, integration by parts implies that
\begin{align*}
		&~dM^{\pi,c}_t\\
		=&~\left(M^{ *}_t	 -\int_0^t \frac{\alpha}{\gamma}e^{\gamma\widehat c^*_s+\gamma\widehat X^{ *}_s-\theta\gamma\widehat\nu^*_s}\,ds			\right)\,de^{\gamma( \widehat X^{\pi,c}_t -\widehat X^{ *}_t)}\\
		&~+e^{\gamma( \widehat X^{\pi,c}_t - \widehat X^{ *}_t )}\,dM^{ *}_t+\left(	-e^{\gamma( \widehat X^{\pi,c}_t -\widehat X^{ *}_t )}  \frac{\alpha}{\gamma}e^{\gamma\widehat c^*_t+\gamma\widehat X^{ *}_t-\theta\gamma\widehat\nu^*_t}  +  \frac{\alpha}{\gamma}e^{\gamma\widehat c_t+\gamma\widehat X^{\pi,c}_t-\theta\gamma\widehat\nu^*_t}		\right)\,dt\\
		&~+d\left\langle	e^{\gamma( \widehat X^{\pi,c}-\widehat X^{ *})}, 	M^{ *}	 \right\rangle_t \\
		=&~e^{\gamma(  \widehat X^{ *}_t-\widehat X^{\pi,c}_t )}\left(M^{\pi,c}_t	 -\int_0^t \frac{\alpha}{\gamma}e^{\gamma\widehat c_s+\gamma\widehat X^{\pi,c}_s-\theta\gamma\widehat\nu^*_s}\,ds			\right)\,de^{\gamma( \widehat X^{\pi,c}_t - \widehat X^{ *}_t )}\qquad(\textrm{using }\eqref{expression-M})\\
		&~+e^{\gamma( \widehat X^{\pi,c}_t - \widehat X^{*}_t )}\,dM^{*}_t + \frac{1}{\gamma}e^{ \gamma\widehat X^{\pi,c}_t	+Y_t	}\bigg(	 \alpha  e^{\gamma\widehat c_t-\theta\gamma\widehat\nu^*_t-Y_t} - \alpha  e^{\gamma\widehat c^*_t-\theta\gamma\widehat\nu^*_t-Y_t}	\bigg)\,dt\\
		&~+d\left\langle	e^{\gamma( \widehat X^{\pi,c}-\widehat X^{ *})}, 	M^{*}	 \right\rangle_t \\
		=&~\left(M^{\pi,c}_t	 -\int_0^t \frac{\alpha}{\gamma}e^{\gamma\widehat c_s+\gamma\widehat X^{\pi,c}_s-\theta\gamma\widehat\nu^*_s}\,ds			\right)\bigg\{	\gamma (\pi_t-\pi^*_t)h_t-\gamma(c_t-c^*_t)-\frac{\gamma}{2}(\pi^2_t-(\pi^*_t)^2)(\sigma^2_t+(\sigma^0_t)^2)\\
		&~	+\frac{\gamma^2}{2} (\pi_t-\pi^*_t)^2(	\sigma^2_t+(\sigma^0_t)^2	)  \bigg\}\,dt			\\
		&~	+\left(M^{\pi,c}_t	 -\int_0^t \frac{\alpha}{\gamma}e^{\gamma\widehat c_s+\gamma\widehat X^{\pi,c}_s-\theta\gamma\widehat\nu^*_s}\,ds		\right)\bigg\{\gamma\sigma_t(\pi_t-\pi^*_t)\,dW_t+\gamma\sigma^0_t(\pi_t-\pi^*_t)\,dW^0_t\bigg\}\qquad (\textrm{using }\eqref{eq:to-supermartingale-1})\\
		&~+\left(	  	M^{\pi,c}_t	- \int_0^t \frac{\alpha}{\gamma}e^{\gamma\widehat c_s+\gamma\widehat X^{\pi,c}_s-\theta\gamma\widehat\nu^*_s}\,ds				\right)(\breve Z_t\,dW_t+\breve Z^0_t\,dW^0_t)\qquad (\textrm{using }\eqref{martingale-representation}\textrm{ and }\eqref{expression-M})\\
		&~+ \left(	  	M^{\pi,c}_t	- \int_0^t \frac{\alpha}{\gamma}e^{\gamma\widehat c_s+\gamma\widehat X^{\pi,c}_s-\theta\gamma\widehat\nu^*_s}\,ds				\right)\bigg(	 \alpha  e^{\gamma\widehat c_t-\theta\gamma\widehat\nu^*_t-Y_t} - \alpha  e^{\gamma\widehat c^*_t-\theta\gamma\widehat\nu^*_t-Y_t}	\bigg)\,dt\qquad (\textrm{using }\eqref{def:Y-2})\\
		&~+\left(	  	M^{\pi,c}_t	- \int_0^t \frac{\alpha}{\gamma}e^{\gamma\widehat c_s+\gamma\widehat X^{\pi,c}_s-\theta\gamma\widehat\nu^*_s}\,ds				\right)	\bigg\{	\gamma\sigma_t\breve Z_t + \gamma\sigma^0_t\breve Z^0_t				\bigg\}(\pi_t-\pi^*_t)\,dt				\quad 								(\textrm{using }\eqref{martingale-representation},~  \eqref{eq:to-supermartingale-1}\textrm{ and }\eqref{expression-M})\\
		=&~\left(	  	\gamma M^{\pi,c}_t	- \int_0^t  \alpha e^{\gamma\widehat c_s+\gamma\widehat X^{\pi,c}_s-\theta\gamma\widehat\nu^*_s}\,ds				\right)\Bigg\{		\frac{ (1-\gamma)(\sigma^2_t+(\sigma^0_t)^2)}{2}\left(	\frac{ \pi^*_t}{1-\gamma}-\frac{  h_t+\sigma_t \breve Z_t+\sigma^0_t\breve Z^0_t }{	(1-\gamma)(\sigma^2_t+(\sigma^0_t)^2)	}	\right)^2	  	\\
		&~\qquad\qquad\qquad\qquad\qquad\qquad\qquad\qquad-			\frac{ (1-\gamma)(\sigma^2_t+(\sigma^0_t)^2)}{2}\left(	
		   \pi_t - \frac{  h_t+\sigma_t \breve Z_t+\sigma^0_t\breve Z^0_t }{	(1-\gamma)(\sigma^2_t+(\sigma^0_t)^2)} +\frac{\gamma\pi^*_t}{1-\gamma}	\right)^2	\\
		&~        +  c^*_t-\frac{\alpha}{\gamma}e^{\gamma\widehat c^*_t-\theta\gamma\widehat\nu^*_t-Y_t} +\frac{1-\gamma}{\gamma}\{	\alpha e^{-Y_t-\theta\gamma\widehat\nu^*_t}	\}^{\frac{1}{1-\gamma}}	-  c_t+\frac{\alpha}{\gamma} e^{\gamma\widehat c_t-\theta\gamma\widehat\nu^*_t-Y_t}-\frac{1-\gamma}{\gamma}\{	\alpha e^{-Y_t-\theta\gamma\widehat\nu^*_t}	\}^{\frac{1}{1-\gamma}}	\Bigg\}\,dt\\
		&~+ \left(	  	M^{\pi,c}_t	- \int_0^t \frac{\alpha}{\gamma}e^{\gamma\widehat c_s+\gamma\widehat X^{\pi,c}_s-\theta\gamma\widehat\nu^*_s}\,ds				\right)\bigg\{	(	\gamma\sigma_t\pi_t-\gamma\sigma_t\pi^*_t+\breve Z_t	)\,dW_t+(\gamma\sigma^0_t\pi_t-\gamma\sigma^0_t\pi^*_t+\breve Z^0_t		)\,dW^0_t				\bigg\}\\
		:=&~f^{\pi,c}_{1,t}\,dt+f^{\pi,c}_{2,t}\,dW_t+f^{\pi,c}_{3,t}\,dW^0_t.
\end{align*}
In the next step, we will verify $\int_0^\cdot f^{\pi,c}_{2,s}\,dW_s+\int_0^\cdot f^{\pi,c}_{3,s}\,dW^0_s$ is a martingale for each $(\pi,c)\in L^\infty\times L^\infty$. 

\textbf{Step 3: verification of martingale properties.} 
Since all coefficients are bounded and $(\pi^*,c^*,\pi,c)\in L^\infty\times L^\infty\times L^\infty\times L^\infty$, it is sufficient to verify 
\begin{equation*}
	\begin{split}
		\mathbb E\left[	\int_0^T\left(	  	M^{\pi,c}_t	- \int_0^t \frac{\alpha}{\gamma}e^{\gamma\widehat c_s+\gamma\widehat X^{\pi,c}_s-\theta\gamma\widehat\nu^*_s}\,ds				\right)^2\bigg\{	1+\breve Z^2_t +(\breve Z^0_t)^2		 	\bigg\}\,dt				\right]	
		<\infty.
	\end{split}
\end{equation*}
From \eqref{relation-Z-M*} in {\bf Step 1}, we have that
\begin{equation*}
	\begin{split}
		&~\mathbb E\left[	\int_0^T\left(	  	M^{\pi,c}_t	- \int_0^t \frac{\alpha}{\gamma}e^{\gamma\widehat c_s+\gamma\widehat X^{\pi,c}_s-\theta\gamma\widehat\nu^*_s}\,ds				\right)^2 \breve Z^2_t   \,dt				\right]\\
		=&~\mathbb E\left[	\int_0^T\left(	  	M^{\pi,c}_t	- \int_0^t \frac{\alpha}{\gamma}e^{\gamma\widehat c_s+\gamma\widehat X^{\pi,c}_s-\theta\gamma\widehat\nu^*_s}\,ds				\right)^2\frac{(M^{*}_t)^2}{ 	  (	M^{*}_t	- \int_0^t \frac{\alpha}{\gamma}e^{\gamma\widehat c^*_s+\gamma\widehat X^{*}_s-\theta\gamma\widehat\nu^*_s}\,ds)^2				 }\mathring Z^2_t   \,dt				\right]\\
		\leq&~\mathbb E\left[	\frac{\sup_{0\leq t\leq T}\left(	  	M^{\pi,c}_t	- \int_0^t \frac{\alpha}{\gamma}e^{\gamma\widehat c_s+\gamma\widehat X^{\pi,c}_s-\theta\gamma\widehat\nu^*_s}\,ds				\right)^2\sup_{0\leq t\leq T}(M^{*}_t)^2}{ \inf_{0\leq t\leq T}	  \left(	M^{*}_t	- \int_0^t \frac{\alpha}{\gamma}e^{\gamma\widehat c^*_s+\gamma\widehat X^{*}_s-\theta\gamma\widehat\nu^*_s}\,ds\right)^2				 }\int_0^T\mathring Z^2_t   \,dt				\right]\\
		:=&~ \mathbb E\left[	\frac{I_{1,T}}{I_{2,T}}\int_0^T\mathring Z^2_t\,dt		\right].
	\end{split}
\end{equation*}
We estimate $I_{1,T}$ and $I_{2,T}$ separately. For the denominator $I_{2,T}$ we have that by the boundedness of all coefficients and $(\pi^*,c^*)$
\begin{equation*}
	\begin{split}
	&~	\inf_{0\leq t\leq T}	  \left(	M^{*}_t	- \int_0^t \frac{\alpha}{\gamma}e^{\gamma\widehat c^*_s+\gamma\widehat X^{*}_s-\theta\gamma\widehat\nu^*_s}\,ds\right)^2	\\
	\geq &~ \frac{1}{\gamma^2}\inf_{0\leq t\leq T}\mathbb E\left[\left. e^{\gamma(\widehat X^*_T-\theta\widehat\mu^*_T)  }	\right|\mathcal F_t	\right]^2\\
	\geq&~C\inf_{0\leq t\leq T}\mathbb E\left[\left.\mathcal E\left(		\int_0^T\gamma\sigma_t\pi^*_t\,dW_t+\int_0^T\left(\gamma\sigma^0_t\pi^*_t-\theta\gamma\mathbb E[ \gamma\sigma_t\pi^*_t|\mathcal F^0_t ]		\right)\,dW^0_t	\right)\right|\mathcal F_t\right]^2\\
	=&~C\inf_{0\leq t\leq T}\mathcal E\left(			\int_0^t\gamma\sigma_s\pi^*_s\,dW_s+\int_0^t\left(\gamma\sigma^0_s\pi^*_s-\theta\gamma\mathbb E[ \gamma\sigma_s\pi^*_s|\mathcal F^0_s ]		\right)\,dW^0_s	\right)^2,
	\end{split}
\end{equation*}
from which using again the boundedness of all coefficients and $(\pi^*,c^*)$ we have that 
\begin{equation*}
	\begin{split}
	&~	\frac{1}{		\inf_{0\leq t\leq T}	  \left(	M^{*}_t	- \int_0^t \frac{\alpha}{\gamma}e^{\gamma\widehat c^*_s+\gamma\widehat X^{*}_s-\theta\gamma\widehat\nu^*_s}\,ds\right)^2		}\\
	\leq&~ \frac{C}{\inf_{0\leq t\leq T}\mathcal E\left(			\int_0^t\gamma\sigma_s\pi^*_s\,dW_s+\int_0^t\left(\gamma\sigma^0_s\pi^*_s-\theta\gamma\mathbb E[ \gamma\sigma_s\pi^*_s|\mathcal F^0_s ]		\right)\,dW^0_s	\right)^2}\\
	\leq&~\sup_{0\leq t\leq T} \frac{C}{ \mathcal E\left(			\int_0^t\gamma\sigma_s\pi^*_s\,dW_s+\int_0^t\left(\gamma\sigma^0_s\pi^*_s-\theta\gamma\mathbb E[ \gamma\sigma_s\pi^*_s|\mathcal F^0_s ]		\right)\,dW^0_s	\right)^2}\\
	\leq&~C\sup_{0\leq t\leq T} \mathcal E\left(-\int_0^t2\gamma\sigma_s\pi^*_s\,dW_s-\int_0^t2\left(\gamma\sigma^0_s\pi^*_s-\theta\gamma\mathbb E[ \gamma\sigma_s\pi^*_s|\mathcal F^0_s ]		\right)\,dW^0_s\right).
	\end{split}
\end{equation*}
For the numerator $I_{1,T}$, it holds that
\begin{equation*}
	\begin{split}
	&~	\sup_{0\leq t\leq T}\left(	  	M^{\pi,c}_t	- \int_0^t \frac{\alpha}{\gamma}e^{\gamma\widehat c_s+\gamma\widehat X^{\pi,c}_s-\theta\gamma\widehat\nu^*_s}\,ds				\right)^2\sup_{0\leq t\leq T}(M^{*}_t)^2\\
	\leq&~\sup_{0\leq t\leq T} e^{2\gamma( \widehat X^{\pi,c}_t - \widehat X^{*}_t )}\left(M^{*}_t	 -\int_0^t \frac{\alpha}{\gamma}e^{\gamma\widehat c^*_s+\gamma\widehat X^{*}_s-\theta\gamma\widehat\nu^*_s}\,ds			\right)^2\sup_{0\leq t\leq T}(M^{*}_t)^2\\
	\leq&~C\sup_{0\leq t\leq T} e^{2\gamma( \widehat X^{\pi,c}_t - \widehat X^{*}_t )}\sup_{0\leq t\leq T} \mathbb E\left[	\left.	 e^{ \gamma( \widehat X^*_T-\theta\widehat\mu^*_T  ) }  +\int_0^T e^{\gamma\widehat c^*_s+\gamma\widehat X^*_s-\theta\gamma\widehat\nu^*_s}\,ds 		\right|\mathcal F_t\right]^4. 
	\end{split}
\end{equation*}
Thus, H\"older's inequality and Doob's maximal inequality imply that
\begin{equation*}
	\begin{split}
		&~\mathbb E\left[	\int_0^T\left(	  	M^{\pi,c}_t	- \int_0^t \frac{\alpha}{\gamma}e^{\gamma\widehat c_s+\gamma\widehat X^{\pi,c}_s-\theta\gamma\widehat\nu^*_s}\,ds				\right)^2 \breve Z^2_t   \,dt				\right]\\
		\leq&~C\mathbb E\left[	\sup_{0\leq t\leq T} \mathcal E\left(-\int_0^t2\gamma\sigma_s\pi^*_s\,dW_s-\int_0^t2\left(\gamma\sigma^0_s\pi^*_s-\theta\gamma\mathbb E[ \gamma\sigma_s\pi^*_s|\mathcal F^0_s ]		\right)\,dW^0_s\right)^{p_1}	\right]^{\frac{1}{p_1}}\mathbb E\left[ \sup_{0\leq t\leq T}e^{2p_2\gamma( \widehat X^{\pi,c}_t - \widehat X^{*}_t )}	\right]^{\frac{1}{p_2}} \\
		&~\times \mathbb E\left[	\sup_{0\leq t\leq T}  \mathbb E\bigg[	e^{\gamma(\widehat X^*_T-\theta\widehat\mu^*_T)  }+\int_0^T e^{\gamma\widehat c^*_t+\gamma\widehat X^*_t-\theta\gamma\widehat\nu^*_t }\,dt\Big|\mathcal F_t	\bigg]^{4p_3}	\right]^{\frac{1}{p_3}}\\
		&~\times \mathbb E\left[\left(	\int_0^T\mathring Z^2_t\,dt		\right)^{p_4}		\right]^{\frac{1}{p_4}}\qquad\qquad\qquad\qquad(\frac{1}{p_1}+\frac{1}{p_2}+\frac{1}{p_3}+\frac{1}{p_4}=1)\\
		\leq&~ C \mathbb E\left[	\sup_{0\leq t\leq T} \mathcal E\left(-\int_0^t2\gamma\sigma_s\pi^*_s\,dW_s-\int_0^t2\left(\gamma\sigma^0_s\pi^*_s-\theta\gamma\mathbb E[ \gamma\sigma_s\pi^*_s|\mathcal F^0_s ]		\right)\,dW^0_s\right)^{p_1}	\right]^{\frac{1}{p_1}}\mathbb E\left[ \sup_{0\leq t\leq T}e^{2p_2\gamma( \widehat X^{\pi,c}_t - \widehat X^{*}_t )}	\right]^{\frac{1}{p_2}}\\
		&~\times \Bigg\{\mathbb E\left[	e^{4p_3\gamma(X^*_T-\theta\widehat\mu^*_T)}	\right]^{\frac{1}{p_3}}	+\mathbb E\left[	\int_0^T e^{4p_3(\gamma\widehat c^*_t+\gamma\widehat X^*_t-\theta\gamma\widehat\nu^*_t)}\,dt 	\right]^{\frac{1}{p_3}}		\Bigg\}  \mathbb E\left[\left(	\int_0^T\mathring Z^2_t\,dt		\right)^{p_4}		\right]^{\frac{1}{p_4}}\\
		<&~\infty,
		\end{split}
\end{equation*}
where $\mathbb E\left[\left(	\int_0^T\mathring Z^2_t\,dt		\right)^{p_4}		\right]<\infty$ is due to $\mathring Z\in H^2_{BMO}$ and the energy inequality; refer to \cite[P.26]{Kazamaki-2006}. Similarly, one also has
\[
	\mathbb E\left[	\int_0^T\left(	  	M^{\pi,c}_t	- \int_0^t \frac{\alpha}{\gamma}e^{\gamma\widehat c_s+\gamma\widehat X^{\pi,c}_s-\theta\gamma\widehat\nu^*_s}\,ds				\right)^2 (1+(\breve Z_t^0)^2)   \,dt				\right]<\infty.
\]

\textbf{Step 4: complete the proof.}  Since $M^{\pi,c}$ is a supermartingale and $\int_0^\cdot f^{\pi,c}_{2,s}\,dW_s+\int_0^\cdot f^{\pi,c}_{3,s}\,dW^0_s$ is a martingale from {\bf Step 3}, it holds that $M^{\pi,c}_\cdot-\int_0^\cdot f^{\pi,c}_{2,s}\,dW_s-\int_0^\cdot f^{\pi,c}_{3,s}\,dW^0_s$ is a supermartingale, i.e. $\int_0^\cdot f^{\pi,c}_{1,t}\,dt$ is a supermartingale for all $(\pi,c)$. It implies that $f^{\pi,c}_1\leq 0$ for all $(\pi,c)$, and $f^{\pi^*,c^*}_1=0$ since $M^*$ is a martingale.
Thus, we have
\[
	\pi^*=\frac{  h+\sigma\breve Z+\sigma^0\breve Z^0		}{\sigma^2+(\sigma^0)^2},\quad   c^*=\alpha^{\frac{1}{1-\gamma}} e^{ -\frac{1}{1-\gamma}(\theta\gamma\widehat\nu^*+Y)  }. 
\]
Define $(Z,Z^0)=(\breve Z-\gamma\sigma\pi^*,\breve Z^0-\gamma\sigma^0\pi^*)$. Then 
\[
		\pi^*=\frac{  h+\sigma  Z+\sigma^0  Z^0		}{(1-\gamma)(\sigma^2+(\sigma^0)^2)}.
\]
 Recall $Y$ defined in \eqref{def:Y-1}. Then $(\widehat X^{*},Y,  Z,Z^0)$ satisfies the FBSDE \eqref{FBSDE-power}.

$\bm{(2)}$ Let $(\widehat X^*,Y,Z,Z^0)$ be a solution to \eqref{FBSDE-power}. By Proposition \ref{prop:BSDE-optim}, together with the probabilistic approach in \cite{CD-2013}, $(\mu^*,\nu^*,\pi^*,c^*)$ is an NE of \eqref{model-MFG-power}, with $\widehat\nu^*=\mathbb E[\widehat{c^*X^*}|\mathcal F^0]$, $\widehat\mu^*$, $\pi^*$ and $c^*$ satisfying the first, the third and the last equality in \eqref{FBSDE-characterization}.  It remains to verify that $\widehat\nu^*$ satisfies the second equality in \eqref{FBSDE-characterization}.   
Indeed, by the last equality in \eqref{FBSDE-characterization}, it holds that
\begin{equation}\label{eq:hat-nu}
	\begin{split}
		\widehat \nu^*_t=&~\mathbb E[\widehat c^*_t|\mathcal F^0_t] + \mathbb E[\widehat X^*_t|\mathcal F^0_t]\\
		=&~\mathbb E\left[	\frac{\log\alpha}{1-\gamma}	\right] -\mathbb E\left[ \left.\frac{Y_t}{1-\gamma}		\right|\mathcal F^0_t\right]-\mathbb E\left[ \left.\frac{\theta\gamma\widehat\nu^*_t}{1-\gamma}		\right|\mathcal F^0_t\right] + \mathbb E[	\widehat X^*_t|\mathcal F^0_t	].
	\end{split}
\end{equation}
Multiplying $\frac{\theta\gamma}{1-\gamma}$ and taking conditional expectations $\mathbb E[\cdot|\mathcal F^0_t]$ on both sides of \eqref{eq:hat-nu}, we have
\begin{equation*}
	\begin{split}
		\mathbb E\left[\left.	\frac{\theta\gamma}{1-\gamma}\widehat\nu^*_t\right|\mathcal F^0_t	\right] =&~ \mathbb E\left[	\frac{\theta\gamma}{1-\gamma}	\right]\mathbb E\left[	\frac{\log\alpha}{1-\gamma}	\right] - \mathbb E\left[	\frac{\theta\gamma}{1-\gamma}	\right]\mathbb E\left[ \left.\frac{Y_t}{1-\gamma}		\right|\mathcal F^0_t\right]\\
		&~-\mathbb E\left[\frac{\theta\gamma}{1-\gamma}	\right]\mathbb E\left[ \left.\frac{\theta\gamma\widehat\nu^*_t}{1-\gamma}		\right|\mathcal F^0_t\right] + \mathbb E\left[	\frac{\theta\gamma}{1-\gamma}	\right]\mathbb E[	\widehat X^*_t|\mathcal F^0_t	],
	\end{split}
\end{equation*}
which implies that 
\[
	\mathbb E\left[\left. \frac{\theta\gamma\widehat\nu^*_t}{1-\gamma}  \right|\mathcal F^0	\right] = \frac{\mathbb E\left[	\frac{\theta\gamma}{1-\gamma}\right]}{1+\mathbb E\left[ \frac{\theta\gamma}{1-\gamma}	\right]	} \left\{	  \mathbb E\left[	\frac{\log\alpha}{1-\gamma}	\right] - \mathbb E\left[ \left.\frac{Y_t}{1-\gamma}		\right|\mathcal F^0_t\right]
	+\mathbb E\left[\widehat X^*_t|\mathcal F^0_t	\right]	\right\}. 
\]
Taking the above equality back into \eqref{eq:hat-nu}, we obtain the second equality in \eqref{FBSDE-characterization}.
\end{proof}
The following two remarks show that portfolio games with exponential utility functions and log utility functions are also equivalent to some FBSDEs.
\begin{remark}[MFGs with exponential utility functions]
	If each player uses an exponential utility criterion, then the MFG becomes:
		\begin{equation}\label{model-MFG-exp}
		\left\{ \begin{split}
			1.&~\textrm{Fix }(\mu,\nu)\textrm{ in some suitable space};\\
			2.&~\textrm{Solve the optimization problem: }\\
			&~\mathbb E\left[    \int_0^T-\alpha e^{-\beta(c_s-\theta\nu_s)}	\,ds- e^{-\beta( X_T-\theta\mu_T  )}	\right]\rightarrow\max \textrm{ over }(\pi,c)\\
			&~\textrm{such that } dX_t=\pi_t(h_t\,dt+\sigma_t\,dW_t+\sigma^0_t\,dW^0_t)-c_t\,dt,~X_0=x_{exp};\\
			3.&~\textrm{Search for the fixed point }(\mu_t,\nu_t)=\left(\mathbb E[ X^*_t|\mathcal F^0_t],\mathbb E[c^*_t|\mathcal F^0_t]\right),~t\in[0,T],\\
			&~X^*\textrm{ and } c^*\textrm{ are the optimal wealth and consumption from }2.
		\end{split}\right.
	\end{equation}
	Following the same argument in Theorem \ref{thm:NE-FBSDE-power}, the NE of the MFG \eqref{model-MFG-exp} has a one-to-one correspondence with the following mean field FBSDE
	\begin{equation}\label{FBSDE-exp}
	\left\{	\begin{split}
			dX_t=&~(\pi^*_th_t-c^*_t)\,dt+\pi^*_t\sigma_t\,dW_t+\pi^*_t\sigma^0_t\,dW^0_t\\
			-dY_t=&~\bigg\{\frac{1}{2\beta(\sigma^2_t+(\sigma^0_t)^2)}\left\{	h_t-\beta(\sigma_tZ_t+\sigma^0_tZ^0_t)		\right\}^2-\frac{\beta}{2}(Z^2_t+(Z^0_t)^2)\\ &~-g_tY_t-\theta g_t\mathbb E[c^*_t|\mathcal F^0_t]   
			+\frac{g_t}{\beta}\log\frac{g_t}{\alpha}-\frac{g_t}{\beta}\bigg\}\,dt - Z_t\,dW_t-Z^0_t\,dW^0_t,\\
			X_0=&~x_{exp},\quad Y_T=-\theta\mathbb E[X_T|\mathcal F^0_T],
		\end{split}\right.
	\end{equation}
	where $g_t=\frac{1}{1+T-t}$, $0\leq t\leq T$, and the optimal investment and consumption are given by
	\[
		\pi^*=\frac{h-\beta(\sigma Z+\sigma^0 Z^0)}{\beta g(\sigma^2+(\sigma^0)^2)},\quad c^*=gX+Y+\frac{\theta\mathbb E[gX+Y|\mathcal F^0]}{1-\mathbb E[\theta]} -\log\frac{g}{\alpha}-\frac{\theta\mathbb E\left[\frac{1}{\beta}\log\frac{g}{\alpha}\right]}{1-\mathbb E[\theta]} .
	\]
\end{remark}

\begin{remark}[MFGs with log utility functions]
	If each player uses log utility criterion, then the MFG becomes:
	\begin{equation}\label{model-MFG-log}
		\left\{ \begin{split}
			1.&~\textrm{Fix }\mu\textrm{ in some suitable space};\\
			2.&~\textrm{Solve the optimization problem: }\\
			&~\mathbb E\left[ \int_0^T  \alpha\log (c_tX_t\nu^{-\theta}_t) 		\,dt+   \log \big(X_T\mu^{-\theta}_T\big) 		\right]\rightarrow\max \textrm{ over }(\pi,c)\\
			&~\textrm{such that } dX_t=\pi_tX_t(h_t\,dt+\sigma_t\,dW_t+\sigma^0_t\,dW^0_t)-c_tX_t\,dt,~X_0=x_{log};\\
			3.&~\textrm{Search for the fixed point }(\mu_t,\nu_t)=(\exp\left(\mathbb E[\widehat X^*_t|\mathcal F^0_t]\right),\exp\left(\mathbb E[	\widehat{c^*_tX^*_t}|\mathcal F^0]	\right) ),~t\in[0,T],\\
			&~(X^*,c^*)\textrm{ is the optimal wealth and consumption rate from }2.
		\end{split}\right.
	\end{equation}
Note that $\arg\max_{\pi,c}\mathbb E\left[ \int_0^T  \alpha\log (c_tX_t\nu^{-\theta}_t) 		\,dt+   \log \big(X_T\mu^{-\theta}_T\big) 		\right]=\arg\max_{\pi,c}\mathbb E[\int_0^T\alpha\log(c_tX_t)+\log X_T]$. Thus, the MFG with log utility criteria is decoupled; each player makes her decision by disregarding her competitors. By \cite{Cheridito2011}, the NE of \eqref{model-MFG-log} is given by
\begin{equation}\label{NE-log}
	\left\{\begin{split}
&~ \pi^*_t=\frac{h_t}{\sigma^2_t+(\sigma^0_t)^2},\quad c^*_t=\frac{\alpha}{1+\alpha(T-t)},\\
&~\mu^*_t=\exp\left(		\mathbb E[\log(X^*_t)|\mathcal F^0_t]\right), \quad \nu^*_t=\exp\left(	\mathbb E[\log(c^*_tX^*_t)|\mathcal F^0_t]	\right),
	\end{split}\right.
\end{equation}
where $X^*$ together with some $(Y,Z)$ is the unique solution to the (decoupled) FBSDE 
\begin{equation}\label{FBSDE-log}
	\left\{\begin{split}
		dX^*_t=&~\frac{h_t}{\sigma^2_t+(\sigma^0_t)^2}X^*_t(	h_t\,dt+\sigma_t\,dW_t+\sigma^0_t\,dW^0_t	)-\frac{\alpha}{1+\alpha(T-t)}X_t^*\,dt,\\
		dY_t=&~\bigg\{\frac{h^2_t}{2(\sigma^2_t+(\sigma^0_t)^2)}+\frac{\alpha}{1+\alpha(T-t)}\log\frac{\alpha}{1+\alpha(T-t)}+\frac{\alpha}{1+\alpha(T-t)}\bigg\}\,dt\\
		&~+Z_t\,dW_t+Z^0_t\,dW^0_t,\\
		X_0=&~x_{log},~Y_T=-\theta\mathbb E[\log X^*_T|\mathcal F^0_T].
	\end{split}\right.
\end{equation}
\end{remark}

 \subsection{MFGs and Mean Field BSDEs Are Equivalent}

In this section,  based on Theorem \ref{thm:NE-FBSDE-power} we prove that the wellposedness of the MFG \eqref{model-MFG-power} is equivalent to the wellposedness of the following mean field BSDE 
\begin{equation}\label{equivalent-BSDE}
	\begin{split}
		\widetilde Y_t=&~	\int_t^T\Bigg(\mathcal J_{\widetilde Z,\widetilde Z^0}(s) +(1-\gamma)  \exp\left\{\frac{\log\alpha}{1-\gamma}-\frac{\widetilde Y_s}{1-\gamma}+\frac{\theta\gamma\mathbb E\left[\frac{\widetilde Y_s}{1-\gamma}|\mathcal F^0_s	\right]-\theta\gamma\mathbb E\left[\frac{\log\alpha}{1-\gamma}\right]}{(1-\gamma)\left(1+\mathbb E\left[\frac{\theta\gamma}{1-\gamma}\right]\right)}\right\}	 	\\	
		&~+\theta\gamma\mathbb E\left[	\left. \exp\left\{	\frac{\log\alpha}{1-\gamma}	-\frac{\widetilde Y_s}{1-\gamma}	+\frac{		\theta\gamma\mathbb E\left[\left.\frac{\widetilde Y_s}{1-\gamma}\right|\mathcal F^0_s\right] -\theta\gamma\mathbb E\left[\frac{\log\alpha}{1-\gamma}	\right]			}{(1-\gamma)\left(	1+\mathbb E\left[\frac{\theta\gamma}{1-\gamma}\right]		\right)}		\right\}	\right|\mathcal F^0_s	\right]	\Bigg)	\,ds\\
		&~	-\int_t^T\widetilde Z_s\,dW_s-\int_t^T\widetilde Z^0_s\,dW^0_s,
	\end{split}
\end{equation}
where $\mathcal J_{\widetilde Z,\widetilde Z^0}$ includes all terms with $(\widetilde Z,\widetilde Z^0)$, and the expression of $\mathcal J_{\widetilde Z,\widetilde Z^0}$ is presented in Appendix \ref{sec:J}.
Specifically, the optimal consumption rate can be characterized by the $\widetilde Y$-component and the optimal investment rate can be characterized by the $(\widetilde Z,\widetilde Z^0)$-component. In order to establish this equivalence, by Theorem \ref{thm:NE-FBSDE-power}, it is sufficient to prove that there is a one-to-one correspondence between each solution to \eqref{FBSDE-power} and each solution to \eqref{equivalent-BSDE}. This is done in the following proposition. 
\begin{proposition}\label{prop:FBSDE-BSDE}
	There is a one-to-one correspondence between solutions to the FBSDE \eqref{FBSDE-power} and solutions to the BSDE \eqref{equivalent-BSDE}. Let $(\widehat X,Y,Z,Z^0)$ and $(\widetilde Y,\widetilde Z,\widetilde Z^0)$ be a solution to \eqref{FBSDE-power} and \eqref{equivalent-BSDE}, respectively. The relation is given by
	\begin{equation}\label{relation-XYZ}
		\begin{split}
			\widetilde Y=Y+\theta\gamma\mathbb E[\widehat X|\mathcal F^0],\qquad \widetilde Z=Z,\qquad \widetilde Z^0=Z^0+\theta\gamma\mathbb E\left[	\frac{h+\sigma Z+\sigma^0Z^0}{(1-\gamma)(\sigma^2+(\sigma^0)^2)}\sigma^0\Big|\mathcal F^0	\right]. 
		\end{split}
	\end{equation}
\end{proposition}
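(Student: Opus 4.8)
The plan is to verify that the affine change of variables \eqref{relation-XYZ} carries solutions of the FBSDE \eqref{FBSDE-power} to solutions of the BSDE \eqref{equivalent-BSDE} and back. The central object is the process $\widetilde Y_t := Y_t + \theta\gamma\,\mathbb E[\widehat X^*_t|\mathcal F^0_t]$, whose dynamics I would obtain by adding the backward dynamics of $Y$ from \eqref{FBSDE-power} to the $\mathbb F^0$-projection of the forward dynamics of $\widehat X^*$. Since \eqref{equivalent-BSDE} is a self-contained equation in $(\widetilde Y,\widetilde Z,\widetilde Z^0)$ while \eqref{FBSDE-power} additionally carries the forward component $\widehat X^*$, the two directions are slightly asymmetric and I would treat them separately.

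For the forward direction, I would first establish the projection formula $d\,\mathbb E[\widehat X^*_t|\mathcal F^0_t] = \mathbb E[b_t|\mathcal F^0_t]\,dt + \mathbb E[\pi^*_t\sigma^0_t|\mathcal F^0_t]\,dW^0_t$, where $b_t$ denotes the drift of $\widehat X^*$ in \eqref{FBSDE-power}. The key point is that the idiosyncratic martingale part $\int_0^\cdot \pi^*_s\sigma_s\,dW_s$ has vanishing $\mathbb F^0$-optional projection: freezing the common-noise path (i.e. conditioning on $\mathcal F^0_T$) leaves $W$ a Brownian motion by independence, so this integral has zero conditional mean, and the tower property gives the claim. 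Adding this to $-dY_t = f_t\,dt - Z_t\,dW_t - Z^0_t\,dW^0_t$ immediately yields $\widetilde Z = Z$ and $\widetilde Z^0 = Z^0 + \theta\gamma\,\mathbb E[\pi^*\sigma^0|\mathcal F^0]$, matching \eqref{relation-XYZ}. The terminal condition follows from $Y_T = -\gamma\theta\widehat\mu^*_T$ and $\widehat\mu^*_T = \mathbb E[\widehat X^*_T|\mathcal F^0_T]$, which give $\widetilde Y_T = 0$, consistent with \eqref{equivalent-BSDE}.

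The heart of the matter is matching the driver. I would substitute $Y_t = \widetilde Y_t - \theta\gamma\,\mathbb E[\widehat X^*_t|\mathcal F^0_t]$ together with the formula for $\widehat\nu^*_t$ from \eqref{FBSDE-characterization} into the consumption expression $c^*_t = \exp\{\frac{1}{1-\gamma}(\log\alpha - Y_t - \theta\gamma\widehat\nu^*_t)\}$. Here I would use the independence of $\mathcal A$ and $\mathbb F^0$ to pull the $\mathcal A$-measurable factor $\frac{\theta\gamma}{1-\gamma}$ through $\mathbb E[\cdot|\mathcal F^0_t]$, turning $\mathbb E[\frac{\theta\gamma}{1-\gamma}\mathbb E[\widehat X^*_t|\mathcal F^0_t]|\mathcal F^0_t]$ into the deterministic constant $\mathbb E[\frac{\theta\gamma}{1-\gamma}]$ times $\mathbb E[\widehat X^*_t|\mathcal F^0_t]$. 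A short computation then shows that all $\mathbb E[\widehat X^*_t|\mathcal F^0_t]$ contributions cancel, so that $c^*_t$ equals the exponential appearing in \eqref{equivalent-BSDE}, depending only on $\widetilde Y_t$ and $\mathbb E[\widetilde Y_t|\mathcal F^0_t]$. The $Y$-driver contributes the term $(1-\gamma)c^*_t$ while the projected drift contributes $\theta\gamma\,\mathbb E[c^*_t|\mathcal F^0_t]$; these are exactly the second and third lines of \eqref{equivalent-BSDE}, and all remaining $(Z,Z^0)$-dependent terms—after inserting $Z^0 = \widetilde Z^0 - \theta\gamma\,\mathbb E[\pi^*\sigma^0|\mathcal F^0]$—collect into $\mathcal J_{\widetilde Z,\widetilde Z^0}$.

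For the reverse direction I would run the construction backwards: given $(\widetilde Y,\widetilde Z,\widetilde Z^0)$ solving \eqref{equivalent-BSDE}, set $Z = \widetilde Z$ and solve the affine relation in \eqref{relation-XYZ} for $Z^0$; since $c^*$ and hence $\pi^*$ are then determined, I would construct $\widehat X^*$ as the solution of its forward SDE and finally define $Y := \widetilde Y - \theta\gamma\,\mathbb E[\widehat X^*|\mathcal F^0]$, checking it satisfies the backward equation by reversing the above computation. I expect the main obstacles to be two technical points: rigorously justifying the $\mathbb F^0$-projection formula for $\widehat X^*$ under general, possibly path-dependent, market parameters; and verifying that the relation $\widetilde Z^0 = Z^0 + \theta\gamma\,\mathbb E[\frac{\sigma^0(h+\sigma Z+\sigma^0 Z^0)}{(1-\gamma)(\sigma^2+(\sigma^0)^2)}|\mathcal F^0]$ is uniquely solvable for $Z^0$, which amounts to inverting a linear operator of the form $Z^0 \mapsto Z^0 + \theta\gamma\,\mathbb E[K Z^0|\mathcal F^0]$ with $\mathcal A$-random coefficient $\theta\gamma$; taking $\mathbb E[\cdot|\mathcal F^0]$ first to isolate the conditional mean and then using the sign and boundedness of the coefficients should make this inversion well-posed.
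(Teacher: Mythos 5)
Your proposal is correct and follows essentially the same route as the paper's own proof: project the forward dynamics onto $\mathbb F^0$ (with the $W$-integral vanishing under the projection), add $\theta\gamma$ times the result to the $Y$-equation to read off $\widetilde Z=Z$ and $\widetilde Z^0=Z^0+\theta\gamma\mathbb E[\pi^*\sigma^0|\mathcal F^0]$, use the $\widehat\nu^*$ formula from \eqref{FBSDE-characterization} together with the independence of $\mathcal A$ and $\mathbb G$ so that the $\mathbb E[\widehat X|\mathcal F^0]$ contributions cancel in the consumption exponential, invert the affine relation to recover $Z^0$ (the paper's \eqref{Z0-in-ZZ0-tilde}), and construct $(\widehat X, Y,Z,Z^0)$ from $(\widetilde Y,\widetilde Z,\widetilde Z^0)$ for the converse. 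The two technical points you flag (the $\mathbb F^0$-projection formula and the well-posedness of the inversion for $Z^0$) are exactly the steps the paper carries out, the latter explicitly via the closed-form expression \eqref{Z0-in-ZZ0-tilde}.
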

\begin{proof}
	Let $(\widehat X,Y,Z,Z^0)$ be a solution to \eqref{FBSDE-power}.	
	From the forward dynamics of \eqref{FBSDE-power}, we get
	\begin{equation*}\label{eq:EX}
	\left\{	\begin{split}
			d\mathbb E[\widehat X_t|\mathcal F^0_t]=&~\mathbb E\left[\left.	\frac{(h_t+\sigma_tZ_t+\sigma^0Z^0_t)h_t}{(1-\gamma) (\sigma^2_t+(\sigma^0_t)^2)  } -\left(	\alpha e^{-Y_t}(\nu^*_t)^{-\theta\gamma}		\right)^{\frac{1}{1-\gamma}} -\frac{(h_t+\sigma_tZ_t+\sigma^0_tZ^0_t)^2}{2(1-\gamma)^2(\sigma^2_t+(\sigma^0_t)^2)}	\right|\mathcal F^0_t	\right]\,dt\\
			&~+\mathbb E\left[\left. \frac{h_t+\sigma_tZ_t+\sigma^0_tZ^0_t}{(1-\gamma)(\sigma^2_t+(\sigma^0_t)^2)}\sigma^0_t\right|\mathcal F^0_t\right]\,dW^0_t,\\
		\mathbb E[\widehat X_0|\mathcal F^0_0]=&\mathbb E[\log x].
		\end{split}\right.
	\end{equation*}
Taking the dynamics of $\mathbb E[\widehat X|\mathcal F^0]$ into the dynamics of $Y$, we get
	\begin{equation*}
	\begin{split}
		 Y_t+\theta\gamma\mathbb E[\widehat X_t|\mathcal F^0_t]=&~-\theta\gamma\int_t^T\mathbb E\bigg[\left.	\frac{(h_s+\sigma_sZ_s+\sigma^0_sZ^0_s)h_s}{(1-\gamma)(\sigma^2_s+(\sigma^0_s)^2)} -( \alpha e^{-Y_s}(\nu^*_s)^{-\theta\gamma} )^{\frac{1}{1-\gamma}}	-\frac{  (h_s+\sigma_sZ_s+\sigma^0_sZ^0_s)^2 }{2(1-\gamma)(\sigma^2_s+(\sigma^0_s)^2)} \right|\mathcal F^0_s	\bigg]\,ds\\
		&~+ \int_t^T  \left\{  (1-\gamma)(\alpha e^{-Y_s}(\nu^*_s)^{-\theta\gamma})^{\frac{1}{1-\gamma}} + \frac{Z^2_s+(Z^0_s)^2}{2}	+ \frac{	\gamma(  h_s+ \sigma_sZ_s+ \sigma^0_sZ^0_s   )^2		}{2(1-\gamma)(  \sigma^2_s+(\sigma^0_s)^2 )}		\right\}\,ds\\
		&~-\theta\gamma\int_t^T \mathbb E\left[\left. \frac{h_s+ \sigma_sZ_s+ \sigma^0_sZ^0_s	}{(1-\gamma)( \sigma^2_s+(\sigma^0_s)^2 )}\sigma^0_s\right|\mathcal F^0_s	\right]\,dW^0_s-\int_t^TZ^0_s\,dW^0_s-\int_t^TZ_s\,dW_s.
	\end{split}
\end{equation*}
Define $(\widetilde Y,\widetilde Z,\widetilde Z^0)$ through \eqref{relation-XYZ}. It holds that
\begin{equation}\label{tilde-YZ-YZ}
	\begin{split}
		\widetilde Y_t=&~-\theta\gamma\int_t^T\mathbb E\bigg[\left.	\frac{(h_s+\sigma_sZ_s+\sigma^0_sZ^0_s)h_s}{(1-\gamma)(\sigma^2_s+(\sigma^0_s)^2)} -( \alpha e^{-Y_s}(\nu^*_s)^{-\theta\gamma} )^{\frac{1}{1-\gamma}}	-\frac{  (h_s+\sigma_sZ_s+\sigma^0_sZ^0_s)^2 }{2(1-\gamma)(\sigma^2_s+(\sigma^0_s)^2)} \right|\mathcal F^0_s	\bigg]\,ds\\
		&~+ \int_t^T  \left\{  (1-\gamma)(\alpha e^{-Y_s}(\nu^*_s)^{-\theta\gamma})^{\frac{1}{1-\gamma}} + \frac{Z^2_s+(Z^0_s)^2}{2}	+ \frac{	\gamma(  h_s+ \sigma_sZ_s+ \sigma^0_sZ^0_s   )^2		}{2(1-\gamma)(  \sigma^2_s+(\sigma^0_s)^2 )}		\right\}\,ds\\
		&~-\int_t^T\widetilde Z_s\,dW_s-\int_t^T\widetilde Z^0_s\,dW^0_s. 
	\end{split}
\end{equation}
From the second equation and the third equation in \eqref{relation-XYZ}, we can solve $Z^0$ in terms of $(\widetilde Z,\widetilde Z^0)$:
\begin{equation}\label{Z0-in-ZZ0-tilde}
	\begin{split}
		Z^0=\widetilde Z^0-\frac{	\theta\gamma \mathbb E\left[	  \left.		\frac{		\sigma^0( h+\sigma\widetilde Z+\sigma^0\widetilde Z^0		)	}{(1-\gamma)(\sigma^2+(\sigma^0)^2)}	\right|\mathcal F^0		\right]		}{	1+\mathbb E\left[	 \frac{\theta\gamma(\sigma^0)^2}{(1-\gamma)(\sigma^2+(\sigma^0)^2)}|\mathcal F^0		\right]				}.
	\end{split}
\end{equation}
Taking \eqref{Z0-in-ZZ0-tilde} into \eqref{tilde-YZ-YZ} and by straightforward calculation we get 
\begin{equation}\label{tildeYZ-Y-tildeZ}
	\begin{split}
		\widetilde Y_t=&~ \int_t^T  \bigg\{ \mathcal J_{\widetilde Z,\widetilde Z^0}(s)  + \theta\gamma\mathbb E\bigg[\left.  ( \alpha e^{-Y_s}(\nu^*_s)^{-\theta\gamma} )^{\frac{1}{1-\gamma}}	 \right|\mathcal F^0_s	\bigg]+ (1-\gamma)(\alpha e^{-Y_s}(\nu^*_s)^{-\theta\gamma})^{\frac{1}{1-\gamma}}   	\bigg\}\,ds\\
		&~-\int_t^T\widetilde Z_s\,dW_s-\int_t^T\widetilde Z^0_s\,dW^0_s. 
	\end{split}
\end{equation}
From the second equation in \eqref{FBSDE-characterization} we get
\begin{equation}\label{exp-Y-nu}
	e^{-Y}(\nu^*)^{-\theta\gamma} = \exp\left\{	-\widetilde Y-\frac{\theta\gamma\mathbb E\left[\frac{\log\alpha}{1-\gamma}\right]}{1+\mathbb E\left[ \frac{\theta\gamma}{1-\gamma} \right]}+\frac{  \theta\gamma\mathbb E\left[ \frac{\widetilde Y}{1-\gamma}\Big|\mathcal F^0		\right]  }{	1+\mathbb E\left[\frac{\theta\gamma}{1-\gamma}\right]	} 	\right\}.
\end{equation}
Taking \eqref{exp-Y-nu} into \eqref{tildeYZ-Y-tildeZ}, we obtain \eqref{equivalent-BSDE}. Thus, for each solution to \eqref{FBSDE-power}, we have found a corresponding solution to \eqref{equivalent-BSDE}. 

Let $(\widetilde Y,\widetilde Z,\widetilde Z^0)$ be a solution to \eqref{equivalent-BSDE}. Define $Z=\widetilde Z$ and $Z^0$ by \eqref{Z0-in-ZZ0-tilde}. With $(Z,Z^0)$, define $\pi^*$ by the third equality in \eqref{FBSDE-characterization}. With \eqref{exp-Y-nu}, define $c^*$ by the fourth equality in \eqref{FBSDE-characterization}. Let $\widehat X$ be the unique solution to the forward SDE in \eqref{FBSDE-power}, in terms of the well-defined $\pi^*$ and $c^*$. Define $Y:=\widetilde Y-\theta\gamma\mathbb E[\widehat X|\mathcal F^0]$. One can check that $(\widehat X,Y,Z,Z^0)$ satisfies the FBSDE \eqref{FBSDE-power}. Thus, for each solution to \eqref{equivalent-BSDE}, we have found one corresponding solution to \eqref{FBSDE-power}.
\end{proof}
 Proposition \ref{prop:FBSDE-BSDE} and Theorem \ref{thm:NE-FBSDE-power} together yield the following one-to-one correspondence between each solution to \eqref{equivalent-BSDE} and each NE of \eqref{model-MFG-power}. Moreover, in Section \ref{sec:closed-form-NE} we will use such correspondence to prove that the NE of \eqref{model-MFG-power} is unique.
 \begin{theorem}\label{thm:NE-BSDE}
There is a one-to-one correspondence between each NE of \eqref{model-MFG-power} and each solution to \eqref{equivalent-BSDE}. The relation is given by
\begin{equation}\label{pi-in-tilde-ZZ0}
	\pi^*=\frac{		h+\sigma\widetilde Z +\sigma^0\widetilde Z^0-\frac{	\theta\gamma\sigma^0\mathbb E\left[\left.	\frac{\sigma^0(h+\sigma\widetilde Z+\sigma^0\widetilde Z^0)}{(1-\gamma)(\sigma^2+(\sigma^0)^2)}\right|\mathcal F^0		\right]		}{1+\mathbb E\left[\left.	\frac{\theta\gamma(\sigma^0)^2}{(1-\gamma)(\sigma^2+(\sigma^0)^2)}\right|\mathcal F^0		\right]}	}{(1-\gamma)(\sigma^2+(\sigma^0)^2)}
\end{equation}
and
\begin{equation}\label{c-in-tilde-Y}
	c^*=\exp\left( \frac{\log\alpha}{1-
	\gamma}	-\frac{\widetilde Y}{1-\gamma}-\frac{\theta\gamma\mathbb E\left[\frac{\log\alpha}{1-\gamma}	\right]}{(1-\gamma)\left(1+\mathbb E\left[\frac{\theta\gamma}{1-\gamma}	\right]\right)}		 +\frac{\theta\gamma\mathbb E\left[ \left. \frac{\widetilde Y}{1-\gamma}\right|\mathcal F^0\right]}{(1-\gamma)\left(1+\mathbb E\left[\frac{\theta\gamma}{1-\gamma}\right]\right)}   \right).	 
\end{equation}
 \end{theorem}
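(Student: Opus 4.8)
The plan is to obtain the desired correspondence by composing the two bijections already in hand: Theorem \ref{thm:NE-FBSDE-power} links each NE of \eqref{model-MFG-power} with a solution of the FBSDE \eqref{FBSDE-power}, and Proposition \ref{prop:FBSDE-BSDE} links each solution of \eqref{FBSDE-power} with a solution of the BSDE \eqref{equivalent-BSDE}. Since the composition of two one-to-one maps is again one-to-one, the existence of the claimed bijection between NE of \eqref{model-MFG-power} and solutions of \eqref{equivalent-BSDE} is immediate once the admissibility classes are shown to match. The only genuinely computational content left is to read off the explicit formulas \eqref{pi-in-tilde-ZZ0} and \eqref{c-in-tilde-Y} by carrying the relation \eqref{relation-XYZ} through the characterization \eqref{FBSDE-characterization}.

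First I would treat the investment rate. By the third line of \eqref{FBSDE-characterization} the NE investment rate satisfies $\pi^* = (h + \sigma Z + \sigma^0 Z^0)/[(1-\gamma)(\sigma^2 + (\sigma^0)^2)]$, where $(Z,Z^0)$ are the components of the FBSDE solution. Proposition \ref{prop:FBSDE-BSDE} gives $Z = \widetilde Z$ together with the expression \eqref{Z0-in-ZZ0-tilde} for $Z^0$ in terms of $(\widetilde Z, \widetilde Z^0)$. Substituting \eqref{Z0-in-ZZ0-tilde} into the formula for $\pi^*$ and simplifying yields \eqref{pi-in-tilde-ZZ0}; this is a direct, if slightly lengthy, algebraic substitution requiring no new estimate. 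Next I would treat the consumption rate: the last line of \eqref{FBSDE-characterization} reads $c^* = (\alpha e^{-Y}(\nu^*)^{-\theta\gamma})^{1/(1-\gamma)}$, and \eqref{exp-Y-nu}, already derived in the proof of Proposition \ref{prop:FBSDE-BSDE} from the second line of \eqref{FBSDE-characterization}, expresses $e^{-Y}(\nu^*)^{-\theta\gamma}$ solely through $\widetilde Y$. Inserting \eqref{exp-Y-nu}, raising to the power $1/(1-\gamma)$, and collecting the constant and conditional-expectation terms produces \eqref{c-in-tilde-Y}. Again this is pure bookkeeping once the earlier identity is available.

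The point that warrants care, and which I expect to be the main (modest) obstacle, is the matching of admissibility and integrability across the two correspondences, so that the composite map really sends NE to BSDE solutions and back rather than merely formal objects. On the FBSDE-to-BSDE side the relation \eqref{relation-XYZ} gives $\widetilde Y = Y + \theta\gamma\,\mathbb E[\widehat X \mid \mathcal F^0]$ and $\widetilde Z = Z$, so the hypotheses of Theorem \ref{thm:NE-FBSDE-power}$\bm{(2)}$, namely $(Z,Z^0)\in L^\infty\times L^\infty$ and $Y + \theta\gamma\,\mathbb E[\widehat X^* \mid \mathcal F^0]\in L^\infty$, translate into boundedness of $\widetilde Y$ and of $(\widetilde Z,\widetilde Z^0)$. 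Here one invokes Assumption 1, the boundedness of all market coefficients, to control the conditional-expectation correction terms appearing in \eqref{relation-XYZ} and \eqref{Z0-in-ZZ0-tilde}; conversely, given a bounded solution of \eqref{equivalent-BSDE}, the same estimates show that the reconstructed $(Z,Z^0)$ and $Y + \theta\gamma\,\mathbb E[\widehat X \mid \mathcal F^0]$ lie in $L^\infty$, so that Theorem \ref{thm:NE-FBSDE-power}$\bm{(2)}$ applies and returns a bona fide NE. Verifying that these function spaces are preserved in both directions is the most delicate bookkeeping, but it follows from the structure of the correction terms together with Assumption 1 rather than from any substantially new argument.
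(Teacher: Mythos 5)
Your proposal is correct and follows essentially the same route as the paper: the paper's own proof is precisely the composition of Theorem \ref{thm:NE-FBSDE-power} with Proposition \ref{prop:FBSDE-BSDE}, reading off \eqref{pi-in-tilde-ZZ0} and \eqref{c-in-tilde-Y} from \eqref{FBSDE-characterization} together with \eqref{relation-XYZ} (via \eqref{Z0-in-ZZ0-tilde} and \eqref{exp-Y-nu}). Your additional care about matching the admissibility classes across the two correspondences is a sound elaboration of what the paper leaves implicit, not a deviation in method.
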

 \begin{proof}
 	The equalities \eqref{pi-in-tilde-ZZ0} and \eqref{c-in-tilde-Y} are given by \eqref{FBSDE-characterization} and \eqref{relation-XYZ}. 
 \end{proof}
\section{The Unique NE in Closed Form under Additional Assumptions}\label{sec:closed-form-NE}
 Theorem \ref{thm:NE-BSDE} implies that solving the MFG \eqref{model-MFG-power} is equivalent to solving the BSDE \eqref{equivalent-BSDE}. However, it is difficult to solve the BSDE in general, due to the mixture of quadratic growth of $(Z,Z^0)$, conditional mean field terms of $(Z,Z^0)$, and exponential functions of $Y$. Therefore, we leave the general case to future study. In this section, we will solve the BSDE \eqref{equivalent-BSDE} and the MFG \eqref{model-MFG-power} under the following additional assumption.
 
\textbf{Assumption 2.} 	The return rate $h$ and the volatilities $(\sigma,\sigma^0)$ have continuous trajectories and are measurable w.r.t. $\mathcal A$ at each time $t\in[0,T]$.

The following theorem shows the closed form solution to the BSDE \eqref{equivalent-BSDE} as well as the NE of the MFG \eqref{model-MFG-power} under \textbf{Assumption 2}.
\begin{theorem}\label{prop:explicit-Zt}
	Under \textbf{Assumption 1} and \textbf{Assumption 2}, the BSDE \eqref{equivalent-BSDE} admits a unique solution $(\widetilde Y,\widetilde Z,\widetilde Z^0)\in L^\infty\times L^\infty\times L^\infty$, and the MFG \eqref{model-MFG-power} has a unique NE.
	
For each $t\in[0,T]$, define the following quantities:
\[
\phi_t=\mathbb E\left[  \frac{h_t\sigma^0_t}{(1-\gamma)(\sigma^2_t+(\sigma^0_t)^2)}	\right],\quad \psi_t=\mathbb E\left[	 \frac{(\sigma^0_t)^2\theta\gamma}{(1-
	\gamma)(\sigma^2_t+(\sigma^0_t)^2)}	\right],
\]
\begin{equation*}
	\begin{split}
		A_t=&~-\frac{\gamma}{ 2(1-\gamma)(\sigma^2_t+(\sigma^0_t)^2)    } \left(	h_t-\frac{\theta\gamma\sigma^0_t\phi_t}{1+\psi_t}		\right)^2-\frac{ \phi^2_t\theta^2\gamma^2		}{2(1+\psi_t)^2}   +\theta\gamma\mathbb E\left[		\frac{ h^2_t-\frac{\theta\gamma\sigma^0_th_t\phi_t}{1+\psi_t}  }{(1-\gamma) (\sigma^2_t+(\sigma^0_t)^2) }			\right]\\
		&~-\frac{\theta\gamma}{2}\mathbb E\left[	  \frac{		\left(		h_t-\frac{\theta\gamma\sigma^0_t\phi_t}{1+\psi_t}		\right)^2	}{(1-\gamma)^2(\sigma^2_t+(\sigma^0_t)^2)}					\right],
	\end{split}
\end{equation*}
and
\[
B_t=\frac{\theta\gamma}{1-\gamma}\frac{\mathbb E\left[	\frac{A_t}{1-\gamma}	\right]}{1+\mathbb E\left[	\frac{\theta\gamma}{1-\gamma}	\right]}-\frac{A_t}{1-\gamma},\quad D= \exp\left\{ \frac{\log\alpha}{1-\gamma}  -\frac{\theta\gamma\mathbb E\left[ \frac{\log\alpha}{1-\gamma}	\right]}{(1-\gamma)\left(1+\mathbb E\left[ \frac{\theta\gamma}{1-\gamma}	\right]	\right)}\right\}.
\]
The unique solution to the BSDE \eqref{equivalent-BSDE} has the following closed form expression:
	\begin{equation}\label{explicit-Y}
	\left\{	\begin{split}
	\widetilde Y_t=&~-\theta\gamma\mathbb E[\log D]-(1-\gamma)\log D \\
	&~+\theta\gamma\mathbb E\left[  \log\left(	\exp\left(	\int_t^T B_s\,ds	\right) + D\int_t^T\exp\left(	\int_t^s B_r\,dr	\right)\,ds		\right)		\right]\\
	&~+(1-\gamma) \log\left(	\exp\left(	\int_t^T B_s\,ds	\right)+D\int_t^T\exp\left(	\int_t^s B_r\,dr	\right)\,ds		\right)+\log\alpha,\\
	\widetilde Z_t=&~\widetilde Z^0_t=0,\quad t\in[0,T].
	\end{split}\right.
\end{equation}
The unique optimal investment rate and optimal consumption rate have the following closed form expressions: 
\begin{equation}\label{optimal-investment-MFG}
	\pi^*_t = \frac{h_t}{(1-\gamma)(\sigma^2_t+(\sigma^0_t)^2)} -\frac{\theta\gamma\sigma^0_t \phi_t}{(1-\gamma)(\sigma^2_t+(\sigma^0_t)^2)\left(1+\psi_t		\right)} ,\quad t\in[0,T],
\end{equation}
respectively,
\begin{equation}\label{optimal-consumption-MFG}
	c^*_t=\frac{ D\exp\left\{-\int_t^TB_r\,dr\right\}		}{1+D\int_t^T \exp\left\{-\int_s^TB_r\,dr\right\}\,ds  },\qquad t\in[0,T].
\end{equation}
\end{theorem}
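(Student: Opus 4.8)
The plan is to route everything through Theorem \ref{thm:NE-BSDE}, which already identifies NE of \eqref{model-MFG-power} bijectively with solutions of the BSDE \eqref{equivalent-BSDE}: thus both claims of the statement reduce to showing that, under \textbf{Assumption 2}, \eqref{equivalent-BSDE} has a \emph{unique} solution in $L^\infty\times L^\infty\times L^\infty$, and to exhibiting that solution in closed form. First I would make the ansatz $\widetilde Z\equiv\widetilde Z^0\equiv 0$. This is natural because under \textbf{Assumption 2} the coefficients $h,\sigma,\sigma^0$ and the parameters $\gamma,\theta,\alpha$ are $\mathcal A$-measurable with no dependence on the Brownian paths; since $\mathcal A$ is independent of $\mathbb G$, every conditional expectation $\mathbb E[\,\cdot\,|\mathcal F^0_s]$ in \eqref{equivalent-BSDE} collapses to the deterministic population average $\mathbb E[\,\cdot\,]$, and the terminal datum $\widetilde Y_T=0$ is deterministic. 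With this ansatz $\mathcal J_{\widetilde Z,\widetilde Z^0}(t)$ reduces to the quantity $A_t$ of the statement, and \eqref{equivalent-BSDE} degenerates into the $\mathcal A$-parametrized mean-field ODE
\begin{equation*}
	-\dot{\widetilde Y}_t=A_t+(1-\gamma)\,c^*_t+\theta\gamma\,\mathbb E[c^*_t],\qquad \widetilde Y_T=0,
\end{equation*}
where $c^*_t$ is the exponential of $\widetilde Y_t$ prescribed by \eqref{c-in-tilde-Y}.

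Next I would change the unknown from $\widetilde Y$ to $c^*$. Differentiating $\log c^*_t$ via \eqref{c-in-tilde-Y} and substituting the ODE above, the two mean-field contributions $\mathbb E[c^*_t]$ cancel, leaving the scalar Riccati (logistic) equation $\dot c^*_t=(c^*_t)^2+B_t\,c^*_t$ with terminal value $c^*_T=D$, whose $\mathcal A$-measurable coefficients $B_t$ and $D$ are exactly those of the statement (indeed $c^*_T=D$ follows by evaluating \eqref{c-in-tilde-Y} at $\widetilde Y_T=0$). Its explicit solution is precisely \eqref{optimal-consumption-MFG}; integrating $-\dot{\widetilde Y}$ back in time then recovers the closed form \eqref{explicit-Y}, and reading \eqref{pi-in-tilde-ZZ0} at $\widetilde Z=\widetilde Z^0=0$ gives \eqref{optimal-investment-MFG}. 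Boundedness of all market parameters makes $A_t,B_t,D$ bounded; since the denominator in \eqref{optimal-consumption-MFG} is bounded below by $1$ (as $D>0$ and the integrand is nonnegative), one checks $c^*,\pi^*\in L^\infty$ and $\widetilde Y\in L^\infty$, so that $(\widetilde Y,0,0)\in L^\infty\times L^\infty\times L^\infty$ and the regularity hypotheses of Theorem \ref{thm:NE-FBSDE-power}$\,\bm{(2)}$ hold. By Theorem \ref{thm:NE-BSDE} this produces an NE with the asserted closed forms, settling existence.

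The main obstacle is uniqueness. By Theorem \ref{thm:NE-BSDE} it suffices to prove that \eqref{equivalent-BSDE} has \emph{at most one} solution in $L^\infty\times L^\infty\times L^\infty$; the explicit solution above would then necessarily be it, confirming a posteriori that $\widetilde Z=\widetilde Z^0=0$ for every solution. Given two such solutions, I would first record that their $L^\infty$ bounds force the martingale parts into $H^2_{BMO}$. Writing the BSDE for the difference $\delta\widetilde Y$, the exponential terms and the $\mathcal F^0$-conditional mean-field terms are Lipschitz on the common bounded range, while $\mathcal J_{\widetilde Z,\widetilde Z^0}$ contributes a term quadratic in $(\widetilde Z,\widetilde Z^0)$ whose difference factors as a bounded $H^2_{BMO}$ process times $(\delta\widetilde Z,\delta\widetilde Z^0)$. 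I would remove this linear-in-$\delta\widetilde Z$ term by a Girsanov change of measure whose density is a BMO stochastic exponential, and then close the estimate by a Gronwall argument. I expect the genuinely delicate point to be the interplay between this measure change and the mean-field conditional expectations $\mathbb E[\,\cdot\,|\mathcal F^0]$, which are defined under $\mathbb P$: one must control $\mathbb E[\delta\widetilde Y|\mathcal F^0]$ and $\mathbb E[\delta\widetilde Y]$ by $\|\delta\widetilde Y\|$ uniformly under the auxiliary measure, using the equivalence of $\mathbb P$ and the new measure together with the BMO bound on the density. Because \textbf{Assumption 2} makes all coefficients deterministic given $\mathcal A$, I anticipate this Gronwall step to go through and to yield $\delta\widetilde Y\equiv0$, hence $\delta\widetilde Z=\delta\widetilde Z^0\equiv0$, establishing the uniqueness of the NE.
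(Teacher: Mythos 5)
Your route to the closed form is the paper's own: invoke Theorem \ref{thm:NE-BSDE}, make the ansatz $\widetilde Z=\widetilde Z^0=0$ (so that, by independence of $\mathcal A$ and $\mathbb G$, all conditional expectations collapse to population averages), reduce \eqref{equivalent-BSDE} to an $\mathcal A$-parametrized ODE, and convert it into a Riccati equation for $c^*$ with terminal value $D$; the paper performs the conversion through the substitutions $\widehat Y,\mathring Y,\breve Y=e^{\mathring Y}=c^*$, while you differentiate $\log c^*$ via \eqref{c-in-tilde-Y} directly, which is the same computation. However, your reduction contains a sign error that makes the written argument inconsistent with the formulas you claim. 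With $\widetilde Z=\widetilde Z^0=0$, relation \eqref{Z0-in-ZZ0-tilde} gives $Z^0_t=-\theta\gamma\phi_t/(1+\psi_t)$, and substituting into the driver yields (this is exactly the paper's \eqref{SDE-Y-tilde}--\eqref{ODE-tildeY-1}) $\widetilde Y_t'=A_t-(1-\gamma)c^*_t-\theta\gamma\mathbb E[c^*_t]$; equivalently, $\mathcal J_{\widetilde Z,\widetilde Z^0}$ evaluated at $(0,0)$ equals $-A_t$, \emph{not} $A_t$. Indeed, $\mathcal J_{0,0}=-\theta\gamma\mathbb E\big[\pi^*_t h_t-\tfrac12(\pi^*_t)^2(\sigma^2_t+(\sigma^0_t)^2)\big]+\tfrac12 (Z^0_t)^2+\tfrac{\gamma(h_t+\sigma^0_tZ^0_t)^2}{2(1-\gamma)(\sigma^2_t+(\sigma^0_t)^2)}$, which is term by term the negative of $A_t$. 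If one carries your ODE $-\dot{\widetilde Y}_t=A_t+(1-\gamma)c^*_t+\theta\gamma\mathbb E[c^*_t]$ through your own next step (the $\mathbb E[c^*]$ contributions indeed cancel), one arrives at $\dot c^*_t=(c^*_t)^2-B_tc^*_t$, whose solution is \eqref{optimal-consumption-MFG} with $B$ replaced by $-B$ --- contradicting the stated closed forms. So your identification of $\mathcal J_{0,0}$ and your claimed Riccati cannot both hold; the former is the error, and once it is fixed the existence half of your argument coincides with the paper's.

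For uniqueness you take a genuinely different and heavier route, and this is where the substantive gap lies. The paper needs no change of measure: under \textbf{Assumption 2} the driver of \eqref{equivalent-BSDE} has no $W$-dependence, so the solutions induced by any two NE have $\widetilde Z=\widetilde Z'=0$; admissibility ($\pi^*\in L^\infty$) together with \eqref{pi-in-tilde-ZZ0} then forces $\widetilde Z^0,\widetilde Z^{0\prime}\in L^\infty$, after which the difference solves a mean-field BSDE with Lipschitz coefficients (via $|e^x-e^y|\le e^{|x|\vee|y|}|x-y|$ and the boundedness of the arguments of the quadratic terms), and standard estimates under $\mathbb P$ conclude --- so the interaction between Girsanov and the $\mathbb P$-conditional mean-field terms, which you flag as the delicate point, never arises. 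Your BMO/Girsanov plan is likely salvageable (the key observation being that $L^\infty$ bounds are invariant under equivalent measures, so $|\mathbb E[\delta\widetilde Y_s|\mathcal F^0_s]|\le\|\delta\widetilde Y_s\|_\infty$ feeds into a sup-norm Gronwall loop), but you leave it at ``I anticipate this goes through,'' which is not a proof. More importantly, your opening reduction --- ``by Theorem \ref{thm:NE-BSDE} it suffices to show at most one solution in $L^\infty\times L^\infty\times L^\infty$'' --- is incomplete: the correspondence does not assert that every NE induces a \emph{bounded} solution of \eqref{equivalent-BSDE} (the construction in Theorem \ref{thm:NE-FBSDE-power}$\bm{(1)}$ only yields BMO-type regularity for the martingale components and no a priori bound on $Y$), so uniqueness within the bounded class does not by itself exclude an NE whose associated solution is unbounded. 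The paper closes exactly this loop by exploiting admissibility of the equilibrium strategies and the $W$-independence of the driver; your proposal needs an analogous step.
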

\begin{proof}
	Our goal is to construct a solution to \eqref{equivalent-BSDE}, such that it does not depend on the Brownian path. If the solution does not depend on the Brownian path, by \textbf{Assumption 2}, the BSDE \eqref{tilde-YZ-YZ} implies 
	\begin{equation}\label{tilde-YZ-YZ-ass2}
		\begin{split}
			\widetilde Y_t=&~-\theta\gamma\int_t^T\mathbb E\left[ \pi^*_sh_s -c^*_s	-\frac{1}{2}(\pi^*_s)^2(\sigma^2_s+(\sigma^0_s)^2)  \right]\,ds\\
			&~+ \int_t^T  \left\{  (1-\gamma)\alpha^{\frac{1}{1-\gamma}} e^{-\frac{\widetilde Y_s}{1-\gamma}-\frac{\theta\gamma}{1-\gamma}\mathbb E[\log c^*_s]} + \frac{Z^2_s+(Z^0_s)^2}{2}	+ \frac{	\gamma(  h_s+ \sigma_sZ_s+ \sigma^0_sZ^0_s   )^2		}{2(1-\gamma)(  \sigma^2_s+(\sigma^0_s)^2 )}		\right\}\,ds\\
			&~-\int_t^T\widetilde Z_s\,dW_s-\int_t^T\widetilde Z^0_s\,dW^0_s. 
		\end{split}
	\end{equation} 
	By the theory of BSDEs, we must have $\widetilde Z=\widetilde Z^0=0$, which together with  \eqref{pi-in-tilde-ZZ0} yields the optimal investment rate \eqref{optimal-investment-MFG}.

Taking $\widetilde Z=\widetilde Z^0=0$ into \eqref{tilde-YZ-YZ-ass2} we have
\begin{equation}\label{SDE-Y-tilde}
	\begin{split}
		d\widetilde Y_t=&~\bigg\{\theta\gamma\mathbb E[\pi^*_th_t] - \theta\gamma\mathbb E[c^*_t]-\frac{\theta\gamma}{2}\mathbb E\left[(\pi^*_t)^2(\sigma^2_t+(\sigma^0_t)^2)\right]-\frac{(Z^0_t)^2}{2}-\frac{\gamma(h_t+\sigma^0_tZ^0_t)^2}{2(1-\gamma)(\sigma^2_t+(\sigma^0_t)^2)}\\
		&~-(1-\gamma)\alpha^{\frac{1}{1-\gamma}} e^{ -\frac{\widetilde Y_t}{1-\gamma}-\frac{\theta\gamma\mathbb E[\log c^*_t]}{1-\gamma} }\bigg\}\,dt.
	\end{split}
\end{equation} 
By $\widetilde Z^0=0$ and the last equality in \eqref{relation-XYZ}, we obtain
\[
Z^0=-\frac{\theta\gamma\mathbb E\left[ \frac{h\sigma^0}{(1-\gamma)(\sigma^2+(\sigma^0)^2)}	\right]}{1+\mathbb E\left[  \frac{\theta\gamma(\sigma^0)^2}{(1-\gamma)(\sigma^2+(\sigma^0)^2)}	\right]}.
\]
Plugging the expression of $Z^0$ into \eqref{SDE-Y-tilde}, by straightforward calculation we have that
\begin{equation}\label{ODE-tildeY-1}
	\widetilde Y'_t= A_t-\theta\gamma\mathbb E[c^*_t] -(1-\gamma)\exp\left\{		\frac{\log\alpha}{1-\gamma} - \frac{  \widetilde Y_t }{1-\gamma}  	-\frac{ \theta\gamma\mathbb E[\log c^*_t] }{1-\gamma}			\right\},
\end{equation}
where $A$ is defined in the statement of the theorem.
Let
\begin{equation}\label{Ytilde-Yhat}
	\widehat Y=\frac{	\widetilde Y-\log\alpha			}{1-\gamma}.
\end{equation}
Equation \eqref{c-in-tilde-Y} implies that
\begin{equation}\label{c*-in-Y-hat}
	c^*=\exp\left\{	  -\widehat Y+\frac{\theta\gamma		}{	(1-\gamma)\left(	1+\mathbb E\left[	\frac{\theta\gamma}{1-\gamma}	\right]		\right)	}\mathbb E[\widehat Y]		\right\}. 
\end{equation}
 Noting that $\widetilde Y'=(1-\gamma)\widehat Y'$ and plugging \eqref{c*-in-Y-hat} into \eqref{ODE-tildeY-1}, we obtain
 \begin{equation}\label{ODE-hatY}
 	\begin{split}
 	\widehat Y'_t=&~ \frac{A_t}{1-\gamma}-\frac{\theta\gamma}{1-\gamma} \mathbb E\left[	 \exp\left\{	-\widehat Y_t+\frac{\theta\gamma}{(1-\gamma)\left(	1+\mathbb E\left[	\frac{\theta\gamma}{1-\gamma}	\right]	\right)}\mathbb E[\widehat Y_t]		\right\}				\right] \\
 	&~ -\exp\left\{	-\widehat Y_t+\frac{\theta\gamma}{(1-\gamma)\left(	1+\mathbb E\left[  \frac{\theta\gamma}{1-\gamma}	\right]	\right)}\mathbb E[\widehat Y_t]			\right\}.
	\end{split}
 \end{equation}
Taking expectations and multiplying both sides of \eqref{ODE-hatY} by $\frac{\theta\gamma}{(1-\gamma)\left(	1+\mathbb E\left[	\frac{\theta\gamma}{1-\gamma}	\right]	\right)}$ , we obtain
\begin{equation}\label{ODE-EhatY}
	\begin{split}
		\frac{\theta\gamma}{(1-\gamma)\left(	1+\mathbb E\left[	\frac{\theta\gamma}{1-\gamma}	\right]	\right)}\mathbb E[\widehat Y_t]'=&~\frac{\theta\gamma}{(1-\gamma)\left(	1+\mathbb E\left[	\frac{\theta\gamma}{1-\gamma}	\right]	\right)}\mathbb E\left[	\frac{A_t}{1-\gamma}	\right] \\
		&~  - \frac{\theta\gamma}{1-\gamma}  \mathbb E\left[	 \exp\left\{	-\widehat Y_t+\frac{\theta\gamma}{(1-\gamma)\left(	1+\mathbb E\left[	\frac{\theta\gamma}{1-\gamma}	\right]	\right)}\mathbb E[\widehat Y_t]		\right\}				\right].
	\end{split}
\end{equation}
Let
\begin{equation}\label{Yring-Yhat}
	\mathring Y= 	\frac{\theta\gamma}{(1-\gamma)\left(	1+\mathbb E\left[	\frac{\theta\gamma}{1-\gamma}	\right]	\right)}\mathbb E[\widehat Y]	 -\widehat Y.
\end{equation}
The difference of \eqref{ODE-hatY} and \eqref{ODE-EhatY} yields an ODE for $\mathring Y$
\begin{equation}\label{ODE-ringY}
	\mathring Y'_t=-\frac{A_t}{1-\gamma}+\frac{\theta\gamma}{(1-\gamma)\left(	1+\mathbb E\left[\frac{\theta\gamma}{1-\gamma}	  \right]	\right)}\mathbb E\left[	\frac{A_t}{1-\gamma}	\right] + \exp\left\{	\mathring Y_t	\right\}.
\end{equation}
Let $\breve Y=\exp(\mathring Y)$. Then, $\breve Y$ satisfies the following Riccati equation
\begin{equation}\label{ODE-breveY}
	\breve Y'_t=\left(		-\frac{A_t}{1-\gamma}+\frac{\theta\gamma}{(1-\gamma)\left(	1+\mathbb E\left[\frac{\theta\gamma}{1-\gamma}	  \right]	\right)}\mathbb E\left[	\frac{A_t}{1-\gamma}	\right]			\right)\breve Y_t + \breve Y^2_t,
\end{equation}
with terminal condition $\breve Y_T=\exp\left\{	-\frac{\theta\gamma\mathbb E\left[ \frac{\log\alpha}{1-\gamma} \right]}{(1-\gamma)\left(	1+\mathbb E\left[\frac{\theta\gamma}{1-\gamma}\right]		\right)}+\frac{\log\alpha}{1-\gamma}		\right\}$. The unique solution to the Riccati equation \eqref{ODE-breveY} is
\begin{equation}\label{Ybreve}
	\breve Y_t=D\left\{		\exp\left(		\int_t^T B_s\,ds			\right) +D\int_t^T\exp\left(		\int_t^s B_r\,dr		\right)\,ds			\right\}^{-1},
\end{equation}
where $B$ and $D$ are defined in the statement of the theorem. By \eqref{c*-in-Y-hat} and the definition of $\mathring Y$ and $\breve Y$, it holds that $c^*=\breve Y$.

Finally, we obtain the closed form expression for $\widetilde Y$ by \eqref{Ytilde-Yhat}, \eqref{Yring-Yhat} and \eqref{Ybreve}.

By now, we have constructed one solution to the BSDE \eqref{equivalent-BSDE} and the MFG \eqref{model-MFG-power}. 
It remains to prove the uniqueness result. Let $(\mu^*,\nu^*,\pi^*,c^*)$ and $(\mu^{*'},\nu^{*'},\pi^{*'},c^{*'})$ be two solutions of the MFG \eqref{model-MFG-power}, where the optimal investment rates are in $L^\infty$. Let $(\widetilde Y,\widetilde Z,\widetilde Z^0)$ and $(\widetilde Y',\widetilde Z',\widetilde Z^{0'})$ be two corresponding solutions to \eqref{equivalent-BSDE}.
Under \textbf{Assumption 2}, the driver of the BSDE \eqref{equivalent-BSDE} does not depend on $W$. Thus, it holds that $\widetilde Z=\widetilde Z'=0$. By \eqref{pi-in-tilde-ZZ0}, we have $\widetilde Z^0,\widetilde Z^{0'}\in L^\infty$. Note that $|e^y-e^{x}|\leq e^{|x|\vee|y|}|x-y|$. Then $(\widetilde Y-\widetilde Y',\widetilde Z-\widetilde Z',\widetilde Z^0-\widetilde Z^{0'})$ satisfies a mean field BSDE with Lipschitz coefficients. Thus, the uniqueness result for the BSDE \eqref{equivalent-BSDE} follows from standard estimate, and the uniqueness result for the MFG \eqref{model-MFG-power} follows from Theorem \ref{thm:NE-FBSDE-power} and Theorem \ref{thm:NE-BSDE}. 
\end{proof}

\begin{remark}
	{ The monotonicity of $\pi^*$ and $c^*$ w.r.t. $\theta$ and $\gamma$ were examined in \cite{LS-2020,LZ-2019}.	This remark investigates the monotonicity of $\pi^*$ and $c^*$ w.r.t. market parameters $h$, $\sigma$ and $\sigma^0$. The same as \cite{LS-2020}, we assume that $h>0$, $\sigma\geq0$ and $\sigma^0\geq0$, which imply that $\phi>0$ and $1+\psi> 0$. When taking derivative w.r.t. some parameter, we assume this parameter is a constant. }

{(1)	Monotonicity of $\pi^*$ w.r.t. the return rate $h$. Direct computation implies that
\[
	\frac{\partial \pi^*}{\partial h} = \frac{1}{(1-\gamma)(  \sigma^2+(\sigma^0)^2 )}>0.
\]		
Thus, the representative player would invest more as the \textit{individual} return rate $h$ increases; it is consistent with intuition and Merton's result \cite{Merton1971}. }

{The dependence of $\pi^*$ on the population's return rate is involved with other population parameters. However, if $h$ is uncorrelated with other population parameters, it holds that
\begin{equation*}
		\frac{  \partial\pi^*}{\partial \bar h} = -\frac{\theta\gamma\sigma^0  \mathbb E\left[  \frac{\sigma^0}{    (1-\gamma) (\sigma^2+(\sigma^0)^2)  }\right] }{(1-\gamma)(\sigma^2+(\sigma^0)^2)1+\psi},\quad \textrm{ which is }\left\{\begin{split}
	 <0, \quad\textrm{ if }\gamma>0,\\
	 >0, \quad\textrm{if }\gamma<0,
	\end{split}\right.
\end{equation*}
where $\bar h$ denotes the average return rate of the population. Thus, when the relative risk aversion is smaller than $1$ (that is, $\gamma>0$), the representative player would invest less if the average return rate of  the population increases. }

{ (2) Monotonicity of $\pi^*$ w.r.t. $\sigma^0$. Direct computation yields
	\[
		\frac{\partial\pi^*}{\partial\sigma^0} = -\frac{2h\sigma^0    }{(1-\gamma)(\sigma^2+(\sigma^0)^2)^2} - \frac{\theta\gamma(\sigma^2-(\sigma^0)^2)}{(1-\gamma)(\sigma^2+(\sigma^0)^2)^2}\frac{\phi}{1+\psi}. 
	\]
Define two thresholds as follows
\[
	\overline\sigma^0=	\frac{	\left\{   h+\sqrt{h^2+\frac{		\theta^2\gamma^2\sigma^2\phi^2	}{(1+\psi)^2}			}	\right\} (1+\psi)	}{	\theta\gamma\phi	}	,\qquad \underline\sigma^0 =	\frac{	\left\{   h-\sqrt{h^2+\frac{		\theta^2\gamma^2\sigma^2\phi^2	}{(1+\psi)^2}			}	\right\} (1+\psi)	}{	\theta\gamma\phi	}.
\]
By the assumption $\sigma^0\geq 0$, it holds that
\[
	\frac{\partial \pi^*}{\partial \sigma^0}<0 \textrm{ if }\sigma^0\in ( 0,\overline\sigma^0\vee \underline\sigma^0  ),\qquad\frac{\partial \pi^*}{\partial \sigma^0}>0\textrm{ if }\sigma^0\in(\overline\sigma^0\vee\underline\sigma^0,\infty).
\]
When there is no competition, i.e. $\theta=0$, $\frac{\partial\pi^*}{\partial\sigma^0}=-\frac{2h\sigma^0}{(1-\gamma)(\sigma^2+(\sigma^0)^2)^2}<0$. 
}

{
When the representative player is less competitive, i.e. $\theta$ is small, then $\overline\sigma^0\vee \underline\sigma^0$ is large so that the volatility is more likely to be located in $(0,\overline\sigma^0\vee \underline\sigma^0)$. Thus, the representative player in the MFG behaves in a similar manner as the one in Merton's problem: the individual investment rate is decreasing w.r.t. the volatility.         }

{
When the representative player is more aggressive, i.e. $\theta$ is large, then $\overline\sigma^0\vee \underline\sigma^0$ takes a relatively smaller value, such that the monotonicity of $\pi^*$  w.r.t. $\sigma^0$ is determined by the threshold $\overline\sigma^0\vee \underline\sigma^0$: when the volatility $\sigma^0$ is larger than $\overline\sigma^0\vee \underline\sigma^0$, the representative player tends to invest more into the risky asset if $\sigma^0$ is larger. The intuitive reason is that the more competitive player is willing to take more risks to expect more returns.   }

{The monotonicity of $\pi^*$ w.r.t. the population's volatility is not tractable. Similar explanations are applicable to the monotonicity w.r.t. $\sigma$.    }

{(3) The monotonicity of $c^*$ w.r.t. $h$, $\sigma$ and $\sigma^0$ is not as tractable as that for $\pi^*$. This is because $\frac{\partial c^*}{\partial \kappa}$ is highly nonlinear in $\kappa$, with $\kappa=h,\sigma,\sigma^0$. 
}

\end{remark}

As a corollary, when all coefficients become time-independent, we recover the MFG in \cite{LS-2020}. Furthermore, we conclude that the strong NE in \cite{LS-2020} is unique in the essentially bounded space.

\begin{corollary}
	Let \textbf{Assumption 1}  and  \textbf{Assumption 2} hold, and the return rate $h$ and the volatilities $(\sigma,\sigma^0)$ be time-independent. Then the optimal investment rate is
	\begin{equation}\label{eq:constant-strategy-power}
		\pi^{*}=\frac{h}{(1-\gamma)(\sigma^2+(\sigma^0)^2	)}	-\frac{\theta\gamma\sigma^0}{(1-\gamma)(\sigma^2+(\sigma^0)^2)}\frac{\phi}{1+\psi},
	\end{equation}
and the optimal consumption rate is
\begin{equation}\label{optimal-consumption-constant}
	c^*_t=\left\{	\begin{split}
		&~\left\{-\frac{1}{B}+\left(	\frac{1}{D}+\frac{1}{B}	\right)e^{B(T-t)}\right\}^{-1},\quad \textrm{if }B\neq 0,\\
		&~\frac{1}{T-t+\frac{1}{D}},\qquad\qquad\qquad \qquad \quad~~  \textrm{if }B=0,	
	\end{split}\right.
\end{equation} 
where $B$ and $D$ are defined in the statement of Theorem \ref{prop:explicit-Zt}, when all market parameters are time-independent. The optimal response $(\pi^*,c^*)$ is unique in $L^\infty\times L^\infty$.
Furthermore, $(\pi^*,c^*)$ given in \eqref{eq:constant-strategy-power} and \eqref{optimal-consumption-constant} is  identical to \cite[Theorem 3.2]{LS-2020}.
\end{corollary}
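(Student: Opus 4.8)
The plan is to derive the corollary as a direct specialization of Theorem \ref{prop:explicit-Zt} to time-independent market parameters, the only genuine work being the explicit evaluation of the integrals in \eqref{optimal-consumption-MFG} and the reconciliation with the formulas in \cite{LS-2020}.

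First I would note that when $h$, $\sigma$ and $\sigma^0$ do not depend on $t$, each of the quantities $\phi_t$, $\psi_t$, $A_t$ and hence $B_t$ from Theorem \ref{prop:explicit-Zt} is constant in $t$ (although possibly $\mathcal A$-random); write $\phi$, $\psi$, $A$, $B$ for these. The investment rate \eqref{optimal-investment-MFG} then collapses to \eqref{eq:constant-strategy-power} merely by dropping the time index, requiring no computation.

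For the consumption rate I start from the closed form \eqref{optimal-consumption-MFG}. Since $B_r \equiv B$ we have $\int_t^T B_r\,dr = B(T-t)$, whence $\exp\{-\int_t^T B_r\,dr\} = e^{-B(T-t)}$, and the inner integral reduces to $\int_t^T e^{-B(T-s)}\,ds$, which equals $\frac{1-e^{-B(T-t)}}{B}$ if $B\neq 0$ and $T-t$ if $B=0$. Substituting into \eqref{optimal-consumption-MFG} and simplifying the resulting rational expression --- most transparently by computing $1/c^*_t$, which in the case $B\neq 0$ equals $(1/D+1/B)e^{B(T-t)}-1/B$ --- produces the two-case formula \eqref{optimal-consumption-constant}. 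Uniqueness of $(\pi^*,c^*)$ in $L^\infty\times L^\infty$ needs no new argument: the time-independent parameters still satisfy \textbf{Assumption 1} and \textbf{Assumption 2}, so the uniqueness assertion of Theorem \ref{prop:explicit-Zt} applies verbatim.

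The main obstacle is the last claim, that $(\pi^*,c^*)$ is identical to \cite[Theorem 3.2]{LS-2020}. The difficulty here is algebraic and notational rather than conceptual, since \cite{LS-2020} obtain their strong equilibrium by a PDE/HJB method and package the relevant constants differently from our $A$, $B$ and $D$. I would translate their parameters into ours, match their equilibrium investment rate against \eqref{eq:constant-strategy-power} term by term, and verify that their consumption rate solves the same scalar Riccati equation \eqref{ODE-breveY} with the same terminal value $\breve Y_T$; uniqueness of the Riccati solution then forces the two consumption rates to coincide. The bookkeeping needed to identify the competition-weighted averages $\phi$, $\psi$ and the aggregate $A$ with the corresponding mean-field quantities in \cite{LS-2020} is the only genuinely laborious step.
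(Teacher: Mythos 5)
Your proposal is correct and matches the paper's intent exactly: the paper states this corollary without proof, treating it as a direct specialization of Theorem \ref{prop:explicit-Zt}, and your computation---constancy of $\phi,\psi,A,B$, evaluation of $\int_t^T e^{-B(T-s)}\,ds$ in the two cases $B\neq 0$ and $B=0$, inversion to get $1/c^*_t=(1/D+1/B)e^{B(T-t)}-1/B$, and verbatim reuse of the theorem's uniqueness assertion---is precisely the verification being left to the reader. Your strategy for the final identification with \cite[Theorem 3.2]{LS-2020} (matching the investment rate termwise and showing both consumption rates solve the same Riccati equation \eqref{ODE-breveY} with the same terminal condition) is also the natural route and is sound.
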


\section{Conclusion}  
In this paper we study mean field portfolio games with consumption. By MOP and DPP, we establish a one-to-one correspondence between each NE and each solution to some FBSDE. The FBSDE is further proved to be equivalent to some BSDE.  Such equivalence is of vital importance to prove that there exists a unique NE in the essentially bounded space. 
When the market parameters do not depend on the Brownian paths, we get the NE in closed form. Moreover, when the market parameters become time-independent, we recover the model in \cite{LS-2020}, and conclude that the strong NE obtained in \cite{LS-2020} is unique in the essentially bounded space, not only in the space of all strong ones.  
\begin{appendix}

\section{Reverse H\"older Inequality}\label{app:reverse}
For some $p>1$, we say a stochastic process $D$ satisfies reverse H\"older inequality $R_p$ if there exists a constant $C$ such that for each $[0,T]$-valued stopping time $\tau$  it holds that
\[
	\mathbb E\left[	\left|\frac{D_T}{D_\tau}		\right|^p\Big|\mathcal F_\tau	\right]\leq C.
\]
 Let $\Theta$ be a stochastic process and $B$ be a Brownian motion. Define \\[-3mm] 
\[
	\mathcal E_t(\Theta)=\mathcal E\left( \int_0^t\Theta_s\,dB_s   \right).
\]
The following result is from \cite[Theorem 3.4]{Kazamaki-2006}.
\begin{lemma}\label{lem:reverse}
	Let $\mathcal E(\Theta)$ be a uniformly integrable martingale. Then $\Theta\in H^2_{BMO}$ if and only if $\mathcal E(\Theta)$ satisfies $R_p$.
\end{lemma}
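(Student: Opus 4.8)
The plan is to first translate the statement into the language of BMO martingales. Writing $M_t:=\int_0^t\Theta_s\,dB_s$, one has $\langle M\rangle_t=\int_0^t\Theta_s^2\,ds$, so by the very definition of $\|\cdot\|_{BMO}$ the condition $\Theta\in H^2_{BMO}$ is exactly the statement that $M$ is a BMO martingale, and $\mathcal E(\Theta)=\mathcal E(M)$. I would prove the two implications separately, in both cases exploiting the exact multiplicative decomposition of a power of a stochastic exponential together with an auxiliary change of measure, so that the genuinely martingale part contributes a conditional expectation equal to $1$ and only an exponential of an increment of $\langle M\rangle$ remains to be estimated by the energy/John--Nirenberg inequality (the same energy inequality already invoked in the proof of Theorem \ref{thm:NE-FBSDE-power}).

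For the direction $\Theta\in H^2_{BMO}\Rightarrow R_p$, I would use that for any constant $p$ the process $pM$ is again BMO, hence $\mathcal E(pM)$ is a uniformly integrable martingale (the standard fact that BMO martingales generate uniformly integrable exponential martingales). The pointwise identity
\[
\Big(\tfrac{\mathcal E(M)_T}{\mathcal E(M)_\tau}\Big)^p=\frac{\mathcal E(pM)_T}{\mathcal E(pM)_\tau}\,\exp\Big(\tfrac{p(p-1)}{2}\big(\langle M\rangle_T-\langle M\rangle_\tau\big)\Big)
\]
together with the change of measure $d\mathbb{Q}_p=\mathcal E(pM)_T\,d\mathbb{P}$ gives, via Bayes' rule, $\mathbb{E}^{\mathbb{P}}[(\mathcal E(M)_T/\mathcal E(M)_\tau)^p\mid\mathcal F_\tau]=\mathbb{E}^{\mathbb{Q}_p}[\exp(\tfrac{p(p-1)}{2}(\langle M\rangle_T-\langle M\rangle_\tau))\mid\mathcal F_\tau]$. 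Since $M$ remains BMO under $\mathbb{Q}_p$ with a norm that stays bounded as $p\downarrow1$, the energy inequality bounds this conditional exponential by $(1-\lambda\|M\|_{BMO(\mathbb{Q}_p)}^2)^{-1}$ as soon as $\lambda\|M\|_{BMO(\mathbb{Q}_p)}^2<1$. The crucial point is that the relevant coefficient here is $\lambda=\tfrac{p(p-1)}{2}$, which tends to $0$ as $p\downarrow1$; hence the threshold is met for $p$ sufficiently close to $1$ regardless of the size of $\|M\|_{BMO}$, and $R_p$ follows uniformly in $\tau$.

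For the converse $R_p\Rightarrow\Theta\in H^2_{BMO}$, I would use the standing hypothesis that $\mathcal E(M)$ is a uniformly integrable martingale to define the probability measure $d\mathbb{Q}=\mathcal E(M)_T\,d\mathbb{P}$; by Girsanov's theorem $\tilde M:=M-\langle M\rangle$ is a $\mathbb{Q}$-local martingale with $\langle\tilde M\rangle^{\mathbb{Q}}=\langle M\rangle$. Bayes' rule converts the hypothesis into $\mathbb{E}^{\mathbb{Q}}[(\mathcal E(M)_T/\mathcal E(M)_\tau)^{p-1}\mid\mathcal F_\tau]\le C$ for every stopping time $\tau$. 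Because $\log(\mathcal E(M)_T/\mathcal E(M)_\tau)=\tilde M_T-\tilde M_\tau+\tfrac12(\langle M\rangle_T-\langle M\rangle_\tau)$ under $\mathbb{Q}$, taking $\mathbb{E}^{\mathbb{Q}}[\cdot\mid\mathcal F_\tau]$ (the $\tilde M$ increment has zero conditional $\mathbb{Q}$-mean) and using the elementary bound $\log y\le y-1$ yields $\tfrac12\,\mathbb{E}^{\mathbb{Q}}[\langle M\rangle_T-\langle M\rangle_\tau\mid\mathcal F_\tau]\le\tfrac{C-1}{p-1}$. This is precisely $\tilde M\in BMO(\mathbb{Q})$; transferring back to $\mathbb{P}$ by the stability of the BMO property under the Girsanov transform generated by the BMO martingale $-\tilde M$ recovers $M\in BMO(\mathbb{P})$, i.e. $\Theta\in H^2_{BMO}$.

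The two structural inputs—that a constant multiple of a BMO martingale still generates a uniformly integrable exponential martingale, and that the BMO property is preserved (with comparable norms) under a change of measure generated by a BMO martingale—are the heart of the argument; everything else is bookkeeping with Bayes' rule and the energy inequality. I expect the preservation of BMO under the Girsanov transform to be the main obstacle, since it is what closes the converse direction and is itself a nontrivial theorem. If one prefers not to invoke it as a black box, it can be bootstrapped from the forward direction applied under $\mathbb{Q}$: that yields a reverse Hölder inequality for $\mathcal E(-\tilde M)$ under $\mathbb{Q}$, hence the negative-moment control of $\mathcal E(M)_T/\mathcal E(M)_\tau$ needed to pass the bound $\mathbb{E}^{\mathbb{Q}}[\langle M\rangle_T-\langle M\rangle_\tau\mid\mathcal F_\tau]\le C'$ back to $\mathbb{P}$ through Hölder's inequality. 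A secondary point is to keep every estimate uniform in the stopping time $\tau$, which is automatic here since the John--Nirenberg/energy bounds are already conditional on $\mathcal F_\tau$; these are exactly the ingredients behind the cited result \cite{Kazamaki-2006}.
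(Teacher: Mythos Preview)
The paper does not prove this lemma at all: it is simply quoted as \cite[Theorem 3.4]{Kazamaki-2006}, with no argument supplied. Your proposal therefore goes well beyond what the paper does, and what you outline is essentially Kazamaki's own proof of that theorem---the multiplicative decomposition of $\mathcal E(M)^p$, Bayes' rule to shift the exponential-of-bracket term onto a new measure, and the John--Nirenberg/energy estimate to close. So your approach is correct in spirit and matches the cited source rather than the paper.

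One technical point to tighten in the converse direction: you write that ``the $\tilde M$ increment has zero conditional $\mathbb{Q}$-mean'', but at that stage $\tilde M$ is only known to be a $\mathbb{Q}$-local martingale, and you have not yet established that $\log(\mathcal E(M)_T/\mathcal E(M)_\tau)$ or $\langle M\rangle_T$ is $\mathbb{Q}$-integrable. The clean fix is to localize: apply your inequality $\log y\le y-1$ with $\tau$ replaced by $\tau\wedge\sigma_n$ for a reducing sequence $(\sigma_n)$ for $\tilde M$, obtain the bracket bound at the stopped level, and pass to the limit by monotone convergence. This is exactly how Kazamaki handles it, and once done your transfer back to $\mathbb{P}$ via BMO-stability under Girsanov is the standard closing step.
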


\section{The Expression of $\mathcal J_{\widetilde Z,\widetilde Z^0}$ in \eqref{equivalent-BSDE}}\label{sec:J}
The term $\mathcal J_{\widetilde Z,\widetilde Z^0}$ in the driver of \eqref{equivalent-BSDE} includes all terms with $(\widetilde Z,\widetilde Z^0)$. It has the following expression
\begin{equation*}
	\begin{split}
		&~\mathcal J_{\widetilde Z,\widetilde Z^0}(\cdot)\\
		=&~ -\theta\gamma\mathbb E\left[\left.	f^{hh}+f^{\sigma h}\widetilde Z+f^{\sigma^0h}\widetilde Z^0	\right|\mathcal F^0\right]+\theta\gamma\mathbb E\left[\theta\gamma f^{\sigma^0h}|\mathcal F^0	\right]	\frac{\mathbb E\left[	f^{\sigma^0 h}+f^{\sigma^0\sigma}\widetilde Z+f^{\sigma^0\sigma^0}\widetilde Z^0|\mathcal F^0	\right]}{1+\mathbb E\left[ \theta\gamma f^{\sigma^0\sigma^0}|\mathcal F^0	\right]}\\
		&~+\theta\gamma\mathbb E\left[	\left. \frac{1}{2}(1-\gamma)(\sigma^2+(\sigma^0)^2) \left\{	f^h+f^\sigma\widetilde Z+f^{\sigma^0}\widetilde Z^0-\frac{\theta\gamma f^{\sigma^0}\mathbb E[ f^{\sigma^0h}+f^{\sigma^0\sigma}\widetilde Z+f^{\sigma^0\sigma^0}\widetilde Z^0|\mathcal F^0]}{1+\mathbb E[\theta\gamma f^{\sigma^0\sigma^0}|\mathcal F^0]}			\right\}^2	\right|\mathcal F^0\right]\\
		&~+\frac{\widetilde Z^2}{2}+\frac{1}{2}\left\{	\widetilde Z^0-\frac{	\theta\gamma\mathbb E[	f^{\sigma^0h}+f^{\sigma^0\sigma}\widetilde Z+f^{\sigma^0\sigma^0}\widetilde Z^0	|\mathcal F^0]		}{1+\mathbb E[ \theta\gamma f^{\sigma^0\sigma^0}|\mathcal F^0  ]}	\right\}^2\\
		&~+\frac{\gamma(1-\gamma)(\sigma^2+(\sigma^0)^2)}{2}\left\{	f^h+f^\sigma\widetilde Z+f^{\sigma^0}\widetilde Z^0-\frac{\theta\gamma f^{\sigma^0}\mathbb E[	f^{\sigma^0h}+f^{\sigma^0\sigma}\widetilde Z+f^{\sigma^0\sigma^0}\widetilde Z^0|\mathcal F^0	]}{1+\mathbb E[\theta\gamma f^{\sigma^0\sigma^0}|\mathcal F^0]}			\right\}^2,
	\end{split}
\end{equation*}
where for any stochastic processes $a$ and $b$ we denote
\[
	f^a:=\frac{a}{(1-\gamma)(\sigma^2+(\sigma^0)^2)}\quad\textrm{and}\quad	f^{ab}:=\frac{ab}{(1-\gamma)(\sigma^2+(\sigma^0)^2)}.
\]

\end{appendix}

\bibliography{Fu}

\end{document}